\documentclass[11pt,twoside]{article}

\usepackage[T1]{fontenc}
\usepackage[utf8]{inputenc}
\usepackage{lmodern}
\usepackage[english]{babel}
\usepackage[a4paper,margin=1in,headsep=18pt]{geometry}
\usepackage{amsmath,amssymb,amsthm}
\usepackage{graphicx}
\usepackage{xcolor}
\usepackage[hidelinks]{hyperref}
\usepackage{caption}
\usepackage{subcaption}
\usepackage{microtype}
\usepackage{multirow}
\usepackage{float}
\usepackage{fancyhdr}

\theoremstyle{definition}
\newtheorem{definition}{Definition}[section]
\theoremstyle{plain}
\newtheorem{proposition}[definition]{Proposition}
\newtheorem{lemma}[definition]{Lemma}
\newtheorem{theorem}[definition]{Theorem}
\theoremstyle{remark}
\newtheorem*{remark}{Remark}

\title{Neural Negative Binomial Regression for 
Weekly Seismicity Forecasting: 
Per-Cell Dispersion Estimation and Tail Risk Assessment}
\author{Igilik Alim}
\date{}

\begin{document}

\maketitle
\thispagestyle{fancy}

\begin{abstract}
Standard approaches to forecasting the weekly number of earthquakes on a spatial grid rely on the Poisson distribution with a single global dispersion assumption. We show that this assumption is systematically violated in seismic data from Central Asia (2010--2024): a likelihood-ratio test with boundary correction rejects the Poisson hypothesis with $p < 10^{-179}$.

The main contribution of this work is the EarthquakeNet 
architecture, which provides an endogenous per-cell estimate 
of the overdispersion parameter~$\alpha$ through a neural 
network (spatial embeddings + MLP), without explicit spatial 
covariance specification. To our knowledge, existing 
NB-regression approaches in seismological forecasting 
estimate a single global~$\alpha$; the per-cell 
parametrization proposed here allows the model to identify 
where seismicity is more strongly clustered and to construct 
probabilistic risk-aware alerts through quantiles of the 
predicted NB distribution.

A Walk-Forward protocol (2018--2023) over four systems shows an 8.6\,\% reduction in MPD relative to NB~GLM. The key advantage appears in the tail stratum ($Y \ge 5$): the CRPS of Hybrid~DL~NB is 12.5\,\% lower than that of NB~GLM---precisely the regime in which the per-cell parameter~$\alpha$ contributes most to estimating the risk of extreme events.
\end{abstract}

\begin{center}
\small
\textbf{Code availability:} The implementation of 
EarthquakeNet, training scripts, and evaluation pipelines 
are publicly available at\\
\url{https://github.com/Al1mkaYandere/seismic-probabilistic-modeling}
\end{center}

\section{Introduction}

Earthquake forecasting is a critical task for natural risk management, infrastructure resilience planning, and emergency response operations. For Central Asia, and the Tian Shan mountain system in particular, this problem carries heightened importance due to high tectonic activity, complex geodynamics, and pronounced spatiotemporal heterogeneity of seismic processes. In the applied setting, the goal is not a deterministic forecast of individual events, but a macroscopic forecast of seismicity intensity: estimating the expected number of earthquakes with magnitude $M \ge 3.0$ on a spatial grid at a weekly horizon.

Historically, count data forecasting in fixed spatiotemporal cells has been formulated within the Poisson framework. However, its key assumption---equality of the conditional mean and conditional variance---is systematically violated in real seismological data. Earthquakes exhibit pronounced clustering associated with swarm activity, foreshock--aftershock sequences, and episodes of anomalous activity, resulting in overdispersion in which the variance substantially exceeds the mean. Under these conditions, uncritical application of the Poisson distribution leads to biased uncertainty estimates and, consequently, to underestimation of the risk of extreme scenarios.

Despite the widespread adoption of machine learning methods in seismological problems, a substantial portion of existing work remains methodologically vulnerable. On one hand, several approaches apply continuous regression loss functions and metrics (e.g., MSE), ignoring the discrete probabilistic nature of the observed counts. On the other hand, even when count targets are handled correctly, models are frequently built on naive autoregressive lags without physically motivated features, limiting the algorithm's ability to capture tectonic stress accumulation and relaxation processes.

This work proposes a transition from the Poisson paradigm to the Negative Binomial (NB) distribution, which naturally accommodates overdispersion. To parametrize the distribution, we develop a hybrid deep learning architecture combining two complementary components: (i) spatial embeddings, which capture latent geological heterogeneity and differences between cells associated with distinct fault structures; and (ii) a multilayer perceptron (MLP) processing physically motivated predictors, including proxies for released seismic energy and seismic quiescence indicators.

The main scientific contributions of this work are:
\begin{itemize}
    \item formal rejection of the Poisson hypothesis via a likelihood-ratio test with boundary correction of the null distribution~\NoHyper\cite{self1987asymptotic}\endNoHyper, demonstrating the inadequacy of the equidispersion assumption in seismic count data;

    \item the EarthquakeNet hybrid architecture (spatial embeddings + MLP with physical predictors, NB loss function), providing per-cell estimation of the overdispersion parameter~$\alpha$ and improved CRPS in the tail stratum relative to the GLM baseline;

    \item a rigorous Walk-Forward protocol (2018--2023) incorporating per-cell ETAS~\NoHyper\cite{ogata1988statistical}\endNoHyper as a seismological baseline, Moran's~$I$~\NoHyper\cite{cliff1981spatial}\endNoHyper for testing spatial conditional independence, and a 5-seed identifiability audit of~$\alpha$.
\end{itemize}

\section{Data and Feature Engineering}

\subsection{Earthquake Catalog and Study Area}

The empirical basis of this study is the open-access USGS (United States Geological Survey) catalog, covering the period from 2010 to early 2024. The analysis is restricted to the seismically active zone of Central Asia, encompassing the Tian Shan and Pamir mountain systems, within the geographic window $38^\circ$--$45^\circ$N and $65^\circ$--$85^\circ$E.

To ensure comparability of statistical estimates and reduce the influence of observational network limitations, the catalog was filtered by the magnitude of completeness~$M_c$. The threshold~$M_c$ was estimated using the Maximum Curvature method~\cite{wiemer2000minimum}: a frequency histogram of events in bins of $\Delta m = 0.1$ was constructed for the entire region, with the maximum occurring at $M \approx 4.3$; applying the correction of $+0.2$ yields $M_c = 4.5$, with a $b$-value of $1.42$ (Aki MLE estimate, $n = 722$ events).

The lower catalog threshold of $M \ge 3.0$ is set below~$M_c$: in the range $M \in [3.0,\, 4.5)$, the USGS catalog for this region may be incomplete, particularly in low-activity cells, introducing bias into event counts. A discussion of this limitation is provided in Section~\ref{sec:limitations}. The final dataset includes only events with magnitude $M \ge 3.0$, which minimizes the contribution of background noise and improves the stability of subsequent probabilistic count process modeling.

\subsection{Spatiotemporal Grid and Target Distribution}

The study region is discretized into a regular spatial grid with 
a resolution of $3.0^\circ \times 3.0^\circ$. The temporal 
aggregation interval is set to one calendar week. For each spatial 
cell $i$ and week $t$, the target variable is defined as
\[
Y^t_{i,j} := \text{number of recorded earthquakes in cell } 
(i,j) \text{ during week } t,
\]
where $Y^t_{i,j} \in \mathbb{N}_0$.

Empirical analysis of the target variable distribution reveals 
pronounced skewness: the overwhelming majority of observations 
correspond to zero counts (no events), while the right tail reaches 
extreme values (30+ events during peak intervals). This data 
structure confirms the inapplicability of classical MSE regression 
for count forecasting tasks.

\subsection{Feature Engineering}

The predictors are designed to reflect the physics of the seismic 
cycle:
\begin{itemize}
    \item \textbf{Seismic Energy.} Magnitude-to-energy transformation 
    based on the relation $E \propto 10^{1.5M}$.
    
    \item \textbf{Seismic Gap.} Time in weeks since the last event 
    with magnitude $M \ge 4.5$, modeling the local stress 
    accumulation phase.
    
    \item \textbf{Accumulated Activity.} Rolling activity windows 
    (4 and 12 weeks) to account for clustering and aftershock 
    sequences.
\end{itemize}

All continuous predictors are standardized prior to training, while 
the target variable is retained in discrete form. Spatial 
dependencies between cells are not specified manually through 
smoothing, but are delegated to the Spatial Embeddings layer, which 
learns them endogenously within end-to-end model training.


\section{Methodology}
\subsection{Data Formalization and Feature Space Construction}

\subsubsection{Formalization of the Source Catalog}

From the USGS catalog metadata, four components are retained for 
two-dimensional spatiotemporal modeling: coordinates, time, and 
magnitude.

\begin{definition}[Seismic Catalog]\label{def:catalog}
The filtered catalog is defined as a set of tuples
\[
\mathcal{D}_{raw} = \{(x_k, y_k, t_k, m_k)\},
\]
where $x_k, y_k, t_k \in \mathbb{R}$, with $x_k$ denoting 
longitude, $y_k$ latitude, and $m_k \ge 3.0$ magnitude. The index 
set
\[
\mathcal{I} = \{1,\dots,N\}
\]
induces the working set
\[
\widetilde{\mathcal{D}}_{raw} \subset \mathcal{I} \times 
\mathcal{D}_{raw} \implies \widetilde{\mathcal{D}}_{raw} = 
\bigcup_{k=1}^{N} \{(k, x_k, y_k, t_k, m_k)\},
\]
providing unique identification of each event by key $k$.
\end{definition}

\subsubsection{Spatiotemporal Discretization}

\begin{definition}[Spatial Domain and Discretization Operator]
\label{def:grid}
The study region
\[
\mathcal{S} = [x_{\min}, x_{\max}] \times [y_{\min}, y_{\max}]
\]
is representable as a disjoint union of cells $\mathcal{S}_{i,j}$ 
with steps $\Delta x$, $\Delta y$:
\[
\mathcal{S} = \bigsqcup_{i=0}^{n_x-1} \bigsqcup_{j=0}^{n_y-1} 
\mathcal{S}_{i,j},
\]
where $n_x$, $n_y$ denote the number of steps along the $x$ and 
$y$ axes. The discretization operator $H$ maps continuous 
coordinates to discrete indices:
\[
H(k, x_k, y_k, t_k, m_k) = (k, i_k, j_k, \tau_k, m_k),
\]
where
\[
i_k = \left\lfloor \frac{x_k - x_{\min}}{\Delta x} \right\rfloor, 
\quad
j_k = \left\lfloor \frac{y_k - y_{\min}}{\Delta y} \right\rfloor, 
\quad
\tau_k = \left\lfloor \frac{t_k - t_{\min}}{\Delta t} \right\rfloor.
\]
(\,Temporal discretization is implemented as follows: each event 
is assigned to the period $\tau = \texttt{pd.Period}(t_k, 
\texttt{"W"})$ (standard ISO week, starting on Sunday); the full 
regular time series for each cell is constructed at frequency 
\texttt{W-MON} (\texttt{pd.date\_range(..., freq="W-MON")}), 
aligning the grid to Mondays with a boundary offset of at most 
$\pm 1$ day. All downstream operations are applied to the 
already-aligned series, so this offset does not affect the feature 
functional~$\Phi$.\,) The discrete catalog is then:
\[
\mathcal{D}_{grid} = \bigcup_{v \in \widetilde{\mathcal{D}}_{raw}} 
\{ H(v) \}.
\]
\end{definition}

\subsubsection{Construction of the Active Grid}

The discrete indices belong to finite sets
\[
\mathcal{I}_x = \{0, 1, \dots, n_x - 1\}, \quad
\mathcal{I}_y = \{0, 1, \dots, n_y - 1\}, \quad
\mathcal{T}_{all} = \{0, 1, \dots, \tau_{\max}\},
\]
where
\[
\tau_{\max} = \left\lfloor \frac{t_{\max} - t_{\min}}{\Delta t} 
\right\rfloor.
\]
The full spatiotemporal grid is:
\[
\Omega = \mathcal{I}_x \times \mathcal{I}_y \times \mathcal{T}_{all}.
\]

\begin{definition}[Index Set (Bucket)]\label{def:bucket}
For $(i,j,\tau) \in \Omega$:
\[
\mathcal{B}_{i,j}^{(\tau)} = \big\{ k \in \{1, \dots, N\} \bigm| 
H(k, x_k, y_k, t_k, m_k) = (k, i, j, \tau, m_k) \big\}
\]
\[
= \big\{ k \in \{1, \dots, N\} \bigm| (k, i, j, \tau, m_k) \in 
\mathcal{D}_{grid} \big\}.
\]
\end{definition}

\begin{definition}[Active Grid]\label{def:active-grid}
The set of active spatial indices:
\[
\mathcal{S}_{active} = \{ (i,j) \in \mathcal{I}_x \times 
\mathcal{I}_y \mid \exists\, \tau \in \mathcal{T}_{all} \colon 
\mathcal{B}_{i,j}^{(\tau)} \neq \emptyset \}.
\]
The working grid:
\[
\Omega_{active} = \mathcal{S}_{active} \times \mathcal{T}_{all}.
\]
\end{definition}

\begin{remark}
The restriction to $\mathcal{S}_{active}$ excludes aseismic nodes 
while preserving time series continuity for active nodes (including 
zero observations), which is necessary for correct construction of 
lagged features.
\end{remark}

\subsubsection{Aggregation and Feature Vector}

\begin{definition}[Aggregated Statistics]\label{def:aggregation}
For $(i,j,\tau)\in\Omega_{active}$, the target variable is:
\[
Y_{i,j}^{(\tau)} = \big|\mathcal{B}_{i,j}^{(\tau)}\big|.
\]
Total released seismic energy:
\[
E_{i,j}^{(\tau)} = \sum_{k \in \mathcal{B}_{i,j}^{(\tau)}} 
10^{1.5 m_k}.
\]
Extreme magnitudes:
\[
M_{\max, i,j}^{(\tau)} = \max_{k \in \mathcal{B}_{i,j}^{(\tau)}} 
\{ m_k \}, \quad
M_{\min, i,j}^{(\tau)} = \min_{k \in \mathcal{B}_{i,j}^{(\tau)}} 
\{ m_k \}.
\]
When $\mathcal{B}_{i,j}^{(\tau)} = \emptyset$, we set
\[
Y_{i,j}^{(\tau)} = 0, \quad E_{i,j}^{(\tau)} = 0,
\]
\[
M_{\max, i,j}^{(\tau)} = 0, \quad M_{\min, i,j}^{(\tau)} = 0.
\]
\end{definition}

The aggregated dataset is:
\[
\mathcal{D}_{agg} = \bigcup_{(i,j,\tau) \in \Omega_{active}}
\Big\{ \big( i, j, \tau, Y_{i,j}^{(\tau)}, E_{i,j}^{(\tau)}, 
M_{\max, i,j}^{(\tau)}, M_{\min, i,j}^{(\tau)} \big) \Big\}.
\]

\begin{definition}[Feature Functional]\label{def:features}
Let $W_{\max}=12$ and $\mathcal{T}_{target} = \{ t \in 
\mathcal{T}_{all} \mid t \ge W_{\max} \}$. Define
\[
\Phi: \mathcal{S}_{active} \times \mathcal{T}_{target} \to 
\mathbb{R}^7,
\]
where
\[
\Phi(i,j,t)=\big(\phi_1(i,j,t),\dots,\phi_7(i,j,t)\big)^T,
\quad (i,j)\in\mathcal{S}_{active},\; t\in\mathcal{T}_{target},
\]
with components
\[
\phi_1(i,j,t)=Y_{i,j}^{(t-1)},
\]
\[
\phi_2(i,j,t)=M_{\max,i,j}^{(t-1)},
\]
\[
\phi_3(i,j,t)=M_{\min,i,j}^{(t-1)},
\]
\[
\phi_4(i,j,t)=\max_{\tau\in\{t-4,\dots,t-1\}} 
M_{\max,i,j}^{(\tau)},
\]
\[
\phi_5(i,j,t)=\sum_{\tau=t-12}^{t-1}Y_{i,j}^{(\tau)},
\]
\[
\phi_6(i,j,t)=\sum_{\tau=t-8}^{t-1}E_{i,j}^{(\tau)},
\]
\[
\phi_7(i,j,t)=(t-1)-\max\{\tau\le t-1\mid 
M_{\max,i,j}^{(\tau)}\ge 4.5\}.
\]
If the index set in $\phi_7$ is empty, the value $500.0$ is 
assigned.
\end{definition}

Let $\mathbf{X}_{i,j}^{(t)} := \Phi(i,j,t)$. The final dataset is:
\[
\mathcal{D}_{final}=\bigcup_{(i,j)\in\mathcal{S}_{active}}\;
\bigcup_{t\in\mathcal{T}_{target}}\Big\{\big(\mathbf{X}_{i,j}^{(t)}
,\;Y_{i,j}^{(t)}\big)\Big\}.
\]

\subsubsection{Operator Pipeline for Data Transformation}

\[
\begin{aligned}
\mathcal{D}_{raw}
&\xrightarrow{\text{Index }(\mathcal{I} \times \cdot)}
\widetilde{\mathcal{D}}_{raw}
\xrightarrow{H}
\mathcal{D}_{grid}
\xrightarrow{\text{Bucketing}}
\{\mathcal{B}_{i,j}^{(\tau)}\}
\\[2pt]
&\xrightarrow{\mathcal{A}\;\text{(aggregation on }\Omega_{active})}
\mathcal{D}_{agg}
\xrightarrow{\Phi\;\text{(lag/rolling features)}}
\mathcal{D}_{final}.
\end{aligned}
\]

The composition of mappings transforms a continuous spatiotemporal 
point process into a tensor representation suitable for neural 
network architectures with shared weights. The hierarchy of temporal 
windows (4, 8, 12 weeks) implements multi-scale memory of the 
process. The feature vector $\mathbf{X}_{i,j}^{(t)}$ is constructed 
exclusively from $\tau \le t-1$, thereby precluding any look-ahead 
bias.

\subsection{Formalization of the Probability Space}

\begin{definition}[Local Phase Space]\label{def:phase-space}
The space of admissible states of a node is:
\[
S = \mathbb{N}_0 \times \mathbb{R}_{\ge 0} \times \mathbb{R} 
\times \mathbb{R},
\]
equipped with the Borel $\sigma$-algebra
\[
\mathcal{B}(S) = 2^{\mathbb{N}_0} \otimes 
\mathcal{B}(\mathbb{R}_{\ge 0}) \otimes \mathcal{B}(\mathbb{R}) 
\otimes \mathcal{B}(\mathbb{R}),
\]
where $2^{\mathbb{N}_0}$ denotes the power set of the countable 
set $\mathbb{N}_0$. The four components correspond, respectively, 
to the event count $Y \in \mathbb{N}_0$, released seismic energy 
$E \in \mathbb{R}_{\ge 0}$, maximum magnitude, and minimum 
magnitude.
\end{definition}

\begin{definition}[Global Probability Space]\label{def:prob-space}
The sample space is defined as the function space 
$\omega: \Omega_{active} \to S$:
\[
\Omega_{full} := S^{\Omega_{active}} \cong 
\prod_{(i,j,\tau) \in \Omega_{active}} S.
\]
The cylindrical $\sigma$-algebra is:
\[
\mathcal{F} = \bigotimes_{(i,j,\tau) \in \Omega_{active}} 
\mathcal{B}(S).
\]
A probability measure $\mathbb{P}$ is postulated on 
$(\Omega_{full}, \mathcal{F})$, inducing the joint distribution 
of seismic events across all active nodes and time steps.
\end{definition}

\subsubsection{Process Dynamics, Filtration, and Measurability}

\begin{definition}[Canonical Coordinate Maps]\label{def:coordinates}
For each index $(i,j,t) \in \Omega_{active}$, the canonical 
coordinate map $s_{i,j}^{(t)}: \Omega_{full} \to S$ is defined by
\[
s_{i,j}^{(t)}(\omega) = \omega(i,j,t).
\]
The target variable $Y_{i,j}^{(t)}: \Omega_{full} \to \mathbb{N}_0$ 
is defined via the first canonical projection 
$\pi_1: S \to \mathbb{N}_0$:
\[
Y_{i,j}^{(t)} = \pi_1 \circ s_{i,j}^{(t)}.
\]
\end{definition}

\begin{proposition}[Measurability of Canonical 
Coordinates]\label{prop:coordinate-measurability}
For any $(i,j,t) \in \Omega_{active}$, the map $s_{i,j}^{(t)}$ is 
$(\mathcal{F}, \mathcal{B}(S))$-measurable, that is,
\[
\forall B \in \mathcal{B}(S):
\quad \bigl(s_{i,j}^{(t)}\bigr)^{-1}(B) \in \mathcal{F}.
\]
Furthermore, the projection $\pi_1: S \to \mathbb{N}_0$ is 
$(\mathcal{B}(S), 2^{\mathbb{N}_0})$-measurable. Consequently, 
$Y_{i,j}^{(t)} = \pi_1 \circ s_{i,j}^{(t)}$ is an 
$\mathcal{F}$-measurable random variable taking values in 
$\mathbb{N}_0$.
\end{proposition}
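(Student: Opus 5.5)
The plan is to read all three claims directly off the definition of the product $\sigma$-algebra $\mathcal{F}$ in Definition~\ref{def:prob-space} and of $\mathcal{B}(S)$ in Definition~\ref{def:phase-space}, then finish with the closure of measurability under composition. Throughout I use the standard description of a product $\sigma$-algebra: it is the smallest $\sigma$-algebra making every coordinate projection measurable. Equivalently, it is generated by the cylinder sets
\[
\bigl\{\omega \in \Omega_{full} : \omega(i,j,t) \in B\bigr\}, \qquad (i,j,t)\in\Omega_{active},\ B\in\mathcal{B}(S).
\]

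\textbf{Step 1: measurability of $s_{i,j}^{(t)}$.} Fix $(i,j,t)\in\Omega_{active}$ and $B\in\mathcal{B}(S)$. By Definition~\ref{def:coordinates},
\[
\bigl(s_{i,j}^{(t)}\bigr)^{-1}(B)=\{\omega:\omega(i,j,t)\in B\}.
\]
This is a one-dimensional cylinder, so it lies in the generating family of $\mathcal{F}$ and hence in $\mathcal{F}$. This holds for every $B\in\mathcal{B}(S)$, not just for a generating class, so no $\pi$-$\lambda$ or good-sets argument is needed. Since $\Omega_{active}$ is finite, the product is finite and the cylinder description agrees with the usual finite product $\sigma$-algebra.

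\textbf{Step 2: measurability of $\pi_1$.} Write $S=\mathbb{N}_0\times S'$ with $S'=\mathbb{R}_{\ge0}\times\mathbb{R}\times\mathbb{R}$. For any $A\subseteq\mathbb{N}_0$,
\[
\pi_1^{-1}(A)=A\times S'.
\]
This is a measurable rectangle: $A\in 2^{\mathbb{N}_0}$, and $S'$ is the whole space of the remaining factors, so it belongs to their Borel $\sigma$-algebra. Every such rectangle lies in the product $\sigma$-algebra $\mathcal{B}(S)$. As before, this covers every $A$ in the target $\sigma$-algebra directly.

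\textbf{Step 3: composition.} For $A\in 2^{\mathbb{N}_0}$,
\[
\bigl(Y_{i,j}^{(t)}\bigr)^{-1}(A)=\bigl(s_{i,j}^{(t)}\bigr)^{-1}\bigl(\pi_1^{-1}(A)\bigr).
\]
By Step 2, $\pi_1^{-1}(A)\in\mathcal{B}(S)$, and by Step 1 its preimage under $s_{i,j}^{(t)}$ lies in $\mathcal{F}$. Therefore $Y_{i,j}^{(t)}$ is an $\mathcal{F}$-measurable, $\mathbb{N}_0$-valued random variable.

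The only point needing care is notational. The paper writes $\mathcal{B}(S)$ as a tensor product of four $\sigma$-algebras, which is not literally the Borel $\sigma$-algebra of $S$ as a topological space. I will work with the product $\sigma$-algebra as stated. I will also note that, because $\mathbb{N}_0$ is discrete and countable and the other factors are separable metric spaces, the two coincide, so the label ``Borel'' is harmless. No other step should pose an obstacle.
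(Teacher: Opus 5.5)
Your proposal is correct and follows the paper's own argument: the paper likewise writes $\bigl(s_{i,j}^{(t)}\bigr)^{-1}(B)$ as a cylinder over the singleton index set $J=\{(i,j,t)\}$ (in its case as a special case of finite-dimensional cylinders generating $\mathcal{F}$), identifies $\pi_1^{-1}(A)=A\times\mathbb{R}_{\ge 0}\times\mathbb{R}\times\mathbb{R}$ as a measurable rectangle, and concludes by composition. Your side remark that the product $\sigma$-algebra agrees with the Borel $\sigma$-algebra of $S$ (all factors are second countable) is a correct clarification that the paper leaves implicit.
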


\begin{proof}
Consider an arbitrary finite index set
\[
J=\{\xi_1,\dots,\xi_n\} \subset \Omega_{active},
\qquad
\xi_k=(i_k,j_k,\tau_k).
\]
Define the corresponding finite-dimensional coordinate projection
\[
\pi_J:\Omega_{full}\to S^n,
\qquad
\pi_J(\omega)=\bigl(\omega(\xi_1),\dots,\omega(\xi_n)\bigr).
\]
For any set $B_n \in \bigotimes_{k=1}^{n}\mathcal{B}(S)$, its 
preimage
\[
\pi_J^{-1}(B_n)
=\{ \omega \in \Omega_{full}\mid \pi_J(\omega)\in B_n\}
\]
is called a cylindrical set, depending only on the coordinates 
indexed by $J$. Introduce the class of all such cylinders for 
fixed $J$:
\[
\mathcal{C}_J
=\left\{ \pi_J^{-1}(B_n)\mid
B_n \in \bigotimes_{k=1}^{n}\mathcal{B}(S)\right\}.
\]
Taking the union over all finite index sets yields
\[
\mathcal{C}
= \bigcup_{\substack{J\subset \Omega_{active}\\ |J|<\infty}}
\mathcal{C}_J.
\]
The cylindrical $\sigma$-algebra on the product $S^{\Omega_{active}}$ 
is by definition the $\sigma$-algebra generated by all 
finite-dimensional cylinders:
\[
\mathcal{F}=\sigma(\mathcal{C}).
\]
Hence, by the property of generated $\sigma$-algebras,
\[
\mathcal{C}\subseteq \mathcal{F}.
\]
Now fix $(i,j,t)\in \Omega_{active}$ and an arbitrary 
$B\in\mathcal{B}(S)$. By definition of the coordinate map,
\[
\bigl(s_{i,j}^{(t)}\bigr)^{-1}(B)
=\{ \omega\in\Omega_{full}\mid \omega(i,j,t)\in B\}.
\]
The right-hand side is a cylindrical set corresponding to the 
singleton index set $J=\{(i,j,t)\}$. Indeed, for $|J|=1$ we have 
$S^1=S$ and
\[
\bigl(s_{i,j}^{(t)}\bigr)^{-1}(B)=\pi_J^{-1}(B).
\]
Since $B\in\mathcal{B}(S)=\bigotimes_{k=1}^{1}\mathcal{B}(S)$,
\[
\bigl(s_{i,j}^{(t)}\bigr)^{-1}(B)\in \mathcal{C}_J
\subseteq \mathcal{C}\subseteq \mathcal{F}.
\]
As $B\in\mathcal{B}(S)$ was arbitrary, $s_{i,j}^{(t)}$ is 
$(\mathcal{F},\mathcal{B}(S))$-measurable.

It remains to verify measurability of $\pi_1$. Let 
$A\in 2^{\mathbb{N}_0}$ be arbitrary. By definition of the preimage,
\[
\begin{aligned}
\pi_1^{-1}(A)
&= \{s\in S\mid \pi_1(s)\in A\} \\
&= \{(r_1,r_2,r_3,r_4)\in
\mathbb{N}_0\times\mathbb{R}_{\ge 0}\times\mathbb{R}\times\mathbb{R}
\mid r_1\in A\} \\
&= A \times \mathbb{R}_{\ge 0}\times \mathbb{R}\times \mathbb{R}.
\end{aligned}
\]
Since $A\in 2^{\mathbb{N}_0}$, $\mathbb{R}_{\ge 0}\in
\mathcal{B}(\mathbb{R}_{\ge 0})$, and 
$\mathbb{R}\in\mathcal{B}(\mathbb{R})$, we obtain
\[
A \times \mathbb{R}_{\ge 0}\times \mathbb{R}\times \mathbb{R}
\in
2^{\mathbb{N}_0}\otimes\mathcal{B}(\mathbb{R}_{\ge 0})\otimes
\mathcal{B}(\mathbb{R})\otimes\mathcal{B}(\mathbb{R})
= \mathcal{B}(S).
\]
Hence $\pi_1^{-1}(A)\in\mathcal{B}(S)$ for all 
$A\in 2^{\mathbb{N}_0}$, so $\pi_1$ is 
$(\mathcal{B}(S), 2^{\mathbb{N}_0})$-measurable. Finally, the 
composition of measurable maps is measurable, so
\[
Y_{i,j}^{(t)}=\pi_1\circ s_{i,j}^{(t)}
\]
is an $\mathcal{F}$-measurable random variable taking values in 
$\mathbb{N}_0$.
\end{proof}

To ensure strict causality of the forecast, we formalize the 
information structure of the process over time: the filtration 
separates the history available to the observer from the 
unobservable future.

\begin{definition}[Natural Filtration]\label{def:filtration}
The filtration $\mathbb{F} = \{ \mathcal{F}_t \}_{t \in 
\mathcal{T}_{all}}$ is defined as
\[
\mathcal{F}_t = \sigma\big( \{ s_{k,m}^{(\tau)} \mid 
(k,m) \in \mathcal{S}_{active}, \; \tau \le t \} \big), 
\quad \mathcal{F}_{t-1} \subseteq \mathcal{F}_t \subseteq 
\mathcal{F}.
\]
The process $\{s_{i,j}^{(t)}\}$ is adapted to $\mathbb{F}$: 
$\sigma(s_{i,j}^{(t)}) \subseteq \mathcal{F}_t$.
\end{definition}

The feature vector $\mathbf{X}_{i,j}^{(t)}: \Omega_{full} \to 
\mathbb{R}^7$ is defined as $\mathbf{X}_{i,j}^{(t)} = 
\Phi(s_{i,j}^{(t-1)}, \dots, s_{i,j}^{(t-W)})$. By the 
measurability of compositions of measurable functions:
\[
\sigma(\mathbf{X}_{i,j}^{(t)}) \subseteq \mathcal{F}_{t-1}.
\]
At the same time, $\sigma(Y_{i,j}^{(t)}) \subseteq \mathcal{F}_t$, 
but $\sigma(Y_{i,j}^{(t)}) \not\subseteq \mathcal{F}_{t-1}$, 
formalizing the indeterminacy of the future given the observed 
history.

Postulating sufficiency of $\mathbf{X}_{i,j}^{(t)}$ for 
$\mathcal{F}_{t-1}$ with respect to $Y_{i,j}^{(t)}$ (Markov 
assumption), we obtain the conditional independence:
\[
Y_{i,j}^{(t)} \perp \mathcal{F}_{t-1} \mid \mathbf{X}_{i,j}^{(t)}.
\]
The problem reduces to approximating the conditional likelihood 
$\mathbb{P}(Y_{i,j}^{(t)} \mid \mathbf{X}_{i,j}^{(t)})$ with 
parameters $\theta = f_{NN}(\mathbf{X}_{i,j}^{(t)})$.

\begin{remark}[Markov Truncation and Finite Memory]
The Markov assumption, combined with the feature functional 
$\Phi$ of Definition~\ref{def:features}, reduces the 
infinite past $\mathcal{F}_{t-1}$ to a finite window of 
$W_{\max} = 12$ weeks. Formally, this means we replace the 
condition on the full sigma-algebra $\mathcal{F}_{t-1}$ with 
the condition on the finite-dimensional sub-sigma-algebra
\[
\mathcal{F}_{t-1}^{W} := \sigma\big(\{ s_{i,j}^{(\tau)} \mid 
\tau \in \{t - W_{\max}, \dots, t-1\} \}\big) 
\subseteq \mathcal{F}_{t-1}.
\]
The approximation error introduced by this truncation is 
negligible under exponential decay of temporal dependence, 
which is consistent with the Omori--Utsu law for aftershock 
sequences.
\end{remark}

\begin{remark}[Causality]\label{rem:causality}
The inclusion $\sigma(\mathbf{X}_{i,j}^{(t)}) \subseteq 
\mathcal{F}_{t-1}$ provides a formal guarantee against 
look-ahead bias: the model operates exclusively on information 
available to the observer at the start of the forecast 
week~$t$. In the context of operational seismological 
forecasting, this means that no event from week~$t$ 
participates in the construction of the feature vector for 
that same week.
\end{remark}

\subsection{Empirical Risk Minimization}

For the training index set $\Omega_{train} \subset 
\mathcal{S}_{active} \times \mathcal{T}_{all}$, define
\[
\mathbf{Y}_{train} = \{ Y_{i,j}^{(t)} \mid (i,j,t) \in 
\Omega_{train} \}, \quad \mathbf{X}_{train} = 
\{ \mathbf{X}_{i,j}^{(t)} \mid (i,j,t) \in \Omega_{train} \}.
\]

\begin{definition}[Training Index Set for Static 
Evaluation]\label{def:omega-train}
Let the week set $\mathcal{T}_{all}$ be chronologically ordered 
and partitioned into disjoint blocks
\[
\mathcal{T}_{train} \cup \mathcal{T}_{test} = \mathcal{T}_{all},
\qquad
\mathcal{T}_{train} \cap \mathcal{T}_{test} = \varnothing,
\qquad
|\mathcal{T}_{train}| = \lfloor 0.8\,|\mathcal{T}_{all}|\rfloor,
\]
where $\mathcal{T}_{train}$ is the initial chronological prefix 
and $\mathcal{T}_{test}$ is the remaining suffix. The training 
and test index sets are then defined as
\[
\Omega_{train}
=\bigl\{(i,j,t)\in \mathcal{S}_{active}\times\mathcal{T}_{all}
\mid t\in\mathcal{T}_{train}\bigr\}
\cap \operatorname{supp}(\mathcal{D}_{final}),
\]
\[
\Omega_{test}
=\bigl\{(i,j,t)\in \mathcal{S}_{active}\times\mathcal{T}_{all}
\mid t\in\mathcal{T}_{test}\bigr\}
\cap \operatorname{supp}(\mathcal{D}_{final}).
\]
\end{definition}

Thus, $\Omega_{train}$ is not an arbitrary subset of 
$\mathcal{S}_{active} \times \mathcal{T}_{all}$: it is a 
synchronized calendar block containing all active cells in the 
early weeks present in $\mathcal{D}_{final}$. Test weeks are 
entirely excluded from training, yielding an out-of-time 
evaluation under the static 80/20 split.

Invoking the filtration $\mathcal{F}_{t-1}$ and the sufficiency 
of $\Phi$, we postulate spatiotemporal conditional independence 
on the training block:
\[
Y_{i,j}^{(t)} \perp Y_{k,m}^{(\tau)}
\mid \{\mathbf{X}_{i',j'}^{(t')}\}_{(i',j',t')\in\Omega_{train}}
\quad
\forall\, (i,j,t) \neq (k,m,\tau) \in \Omega_{train},
\]
which yields the factorization of the joint conditional 
likelihood.

\begin{remark}[Spatial Conditional Independence 
Assumption]\label{rem:spatial-independence}
This assumption is standard in GLM and neural count regression 
models, but may be violated in seismology: Coulomb stress 
transfer and ETAS triggering~\cite{ogata1988statistical} induce 
cross-cell dependencies. The factorization below is a marginal 
predictive factorization, not a full generative model of the 
joint spatiotemporal field. Empirical verification via Moran's 
$I$ on standardized Pearson residuals is carried out in 
Section~\ref{sec:moran}: if significant autocorrelation is 
detected ($p < 0.05$), the proposed model should be interpreted 
as a \emph{marginal predictor}. Extension to an ETAS-like 
spatial convolution is left for future work 
(see Section~\ref{sec:limitations}).
\end{remark}

\[
\mathbb{P}(\mathbf{Y}_{train} \mid \mathbf{X}_{train}) = 
\prod_{(i,j,t) \in \Omega_{train}} 
\mathbb{P}(Y_{i,j}^{(t)} \mid \mathbf{X}_{i,j}^{(t)}).
\]
The parametric approximation $\mathbb{P}_{\theta}$ inherits 
this factorization structure.

\begin{remark}[Relationship to Walk-Forward Validation]\label{rem:wf-vs-static}
The static 80/20 split defined above differs structurally from the Walk-Forward protocol of Section~\ref{sec:walk-forward}. In the static split, $\Omega_{train}$ is fixed once; in the Walk-Forward protocol, a separate $\Omega_{train}^{(Y)}$ is constructed for each test year $Y \in \{2018, \dots, 2023\}$, expanding as $Y$ increases. The static split is used for model selection and hyperparameter tuning; the Walk-Forward protocol provides the primary out-of-sample evaluation reported in Table~\ref{tab:walk-forward}.
\end{remark}

\begin{proposition}[Equivalence of KL Divergence Minimization 
and NLL]\label{prop:nll}
Let $\mathbb{P}$ denote the true measure and $\mathbb{P}_{\theta}$ 
a parametric approximation. Then
\[
\theta^* = \arg\min_{\theta}\, \mathbb{E}_{\mathbf{X} \sim 
\mathbb{P}}\big[ D_{KL}(\mathbb{P}(Y \mid \mathbf{X}) \parallel 
\mathbb{P}_{\theta}(Y \mid \mathbf{X})) \big]
= \arg\max_{\theta}\, \mathbb{E}_{\mathbf{X}, Y \sim \mathbb{P}}
\big[ \log \mathbb{P}_{\theta}(Y \mid \mathbf{X}) \big].
\]
\end{proposition}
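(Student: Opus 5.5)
The plan is to expand the conditional KL divergence inside the expectation and split it into a part that does not depend on $\theta$ and a part that equals the negative expected log-likelihood. For a fixed feature value $\mathbf{X}=\mathbf{x}$, write $p(y\mid\mathbf{x})=\mathbb{P}(Y=y\mid\mathbf{X}=\mathbf{x})$ and $p_\theta(y\mid\mathbf{x})=\mathbb{P}_\theta(Y=y\mid\mathbf{X}=\mathbf{x})$. Since $Y$ takes values in $\mathbb{N}_0$ (Proposition~\ref{prop:coordinate-measurability}), these are probability mass functions on a countable set, so the divergence is a series:
\[
D_{KL}\bigl(p(\cdot\mid\mathbf{x})\,\|\,p_\theta(\cdot\mid\mathbf{x})\bigr)
=\sum_{y\in\mathbb{N}_0} p(y\mid\mathbf{x})\log p(y\mid\mathbf{x})
-\sum_{y\in\mathbb{N}_0} p(y\mid\mathbf{x})\log p_\theta(y\mid\mathbf{x}).
\]
The first sum is $-H(Y\mid\mathbf{X}=\mathbf{x})$, the negative conditional entropy, and it does not involve $\theta$.

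Next I take the expectation over $\mathbf{X}\sim\mathbb{P}$. For the second sum I use the disintegration of the joint law of $(\mathbf{X},Y)$ into the marginal of $\mathbf{X}$ and the conditional kernel $p(\cdot\mid\mathbf{x})$. This exists because $Y$ is discrete and $\mathbf{X}$ is $\mathbb{R}^7$-valued, so regular conditional distributions exist. The iterated expectation $\mathbb{E}_{\mathbf{X}}\bigl[\sum_y p(y\mid\mathbf{X})\log p_\theta(y\mid\mathbf{X})\bigr]$ then equals $\mathbb{E}_{\mathbf{X},Y}[\log p_\theta(Y\mid\mathbf{X})]$. This gives
\[
\mathbb{E}_{\mathbf{X}}\bigl[D_{KL}(\cdot\|\cdot)\bigr]=-H(Y\mid\mathbf{X})-\mathbb{E}_{\mathbf{X},Y}\bigl[\log p_\theta(Y\mid\mathbf{X})\bigr].
\]
Because the first term is a constant $C$ independent of $\theta$, minimizing the left side is equivalent to maximizing the expected log-likelihood. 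The two problems therefore have the same set of optimizers, which is the claimed equality of $\arg\min$ and $\arg\max$, read as an equality of sets.

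The main obstacle is making the splitting of the series and the interchange of expectations legitimate. I will assume the standing integrability condition $H(Y\mid\mathbf{X})<\infty$, i.e.\ $\mathbb{E}\bigl[\lvert\log p(Y\mid\mathbf{X})\rvert\bigr]<\infty$. This holds, for instance, when $Y$ has a finite first moment, since a distribution on $\mathbb{N}_0$ with finite mean has finite entropy. Under this assumption the constant $C$ is finite. For each $\theta$, either $\mathbb{E}\bigl[\log p_\theta(Y\mid\mathbf{X})\bigr]=-\infty$, in which case the expected KL is $+\infty$, or it is finite, in which case linearity applies. Both sides of the identity then take values in $\mathbb{R}\cup\{+\infty\}$ consistently, with a shift by the finite constant $C$. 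Nonnegativity of each term $p(y\mid\mathbf{x})\log\bigl(p/p_\theta\bigr)$ summed as a divergence is guaranteed by Gibbs' inequality. Tonelli's theorem then justifies the interchange of the sum over $y$ with the expectation over $\mathbf{X}$ for the nonnegative KL integrand, and for the entropy part separately.

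In the final step I note that the argument only uses that $\mathbb{P}_\theta$ enters through the second term. Hence the equivalence holds for any parametric family, in particular for the NB family with $\theta=f_{NN}(\mathbf{X})$. With the factorization established above, the empirical counterpart is the sum of per-cell NB log-likelihoods over $\Omega_{train}$.
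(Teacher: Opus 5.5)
Your proposal is correct and follows the same route as the paper: expand the conditional KL divergence, use the tower property to split off the $\theta$-independent conditional entropy, and conclude that the two problems have the same optimizers. Your extra bookkeeping makes rigorous what the paper does only formally: the finiteness of $H(Y\mid\mathbf{X})$, the case where the expected log-likelihood is $-\infty$, and Tonelli applied separately to the nonnegative series with terms $-p\log p$ and $-p\log p_\theta$ (the KL summands $p\log(p/p_\theta)$ themselves are not termwise nonnegative, so this is the correct way to justify the interchange).
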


\begin{proof}
Expanding the KL divergence by its definition and applying the 
tower property of expectation:
\[
\mathbb{E}_{\mathbf{X}}\big[ D_{KL}(\mathbb{P} \parallel 
\mathbb{P}_{\theta}) \big]
= \mathbb{E}_{\mathbf{X}}\left[
\sum_{y} \mathbb{P}(y \mid \mathbf{X})
\log \frac{\mathbb{P}(y \mid \mathbf{X})}
{\mathbb{P}_{\theta}(y \mid \mathbf{X})}
\right].
\]
Splitting the logarithm and recognizing the conditional entropy 
$\mathcal{H}(\mathbb{P}) := -\mathbb{E}_{\mathbf{X},Y}\big[
\log \mathbb{P}(Y \mid \mathbf{X})\big]$:
\[
\mathbb{E}_{\mathbf{X}}\big[ D_{KL}(\mathbb{P} \parallel 
\mathbb{P}_{\theta}) \big]
= -\mathcal{H}(\mathbb{P})
- \mathbb{E}_{\mathbf{X}, Y \sim \mathbb{P}}\big[
\log \mathbb{P}_{\theta}(Y \mid \mathbf{X}) \big].
\]
Since $\mathcal{H}(\mathbb{P})$ does not depend on $\theta$, 
minimizing the left-hand side over $\theta$ is equivalent to 
maximizing the second term:
\[
\theta^* = \arg\max_{\theta}\, 
\mathbb{E}_{\mathbf{X}, Y \sim \mathbb{P}}\big[
\log \mathbb{P}_{\theta}(Y \mid \mathbf{X}) \big].
\]
Note that $D_{KL} \ge 0$ with equality if and only if 
$\mathbb{P}_{\theta} = \mathbb{P}$ almost surely, so the 
minimum is attained at the true distribution whenever it lies 
in the parametric family.
\end{proof}

Since $\mathbb{P}$ is unavailable, we replace the population 
expectation with its empirical counterpart. By the law of large 
numbers, for i.i.d.\ draws $\{(\mathbf{X}_{i,j}^{(t)}, 
Y_{i,j}^{(t)})\}_{(i,j,t) \in \Omega_{train}}$:
\[
\mathbb{E}_{\mathbf{X}, Y \sim \mathbb{P}}\big[
\log \mathbb{P}_{\theta}(Y \mid \mathbf{X})\big]
\;\approx\;
\frac{1}{|\Omega_{train}|}
\sum_{(i,j,t) \in \Omega_{train}}
\log \mathbb{P}_{\theta}(Y_{i,j}^{(t)} \mid 
\mathbf{X}_{i,j}^{(t)}).
\]
Since $|\Omega_{train}|$ is constant with respect to $\theta$, 
maximizing the empirical mean is equivalent to maximizing the 
sum. Taking the logarithm of the factorized joint likelihood 
(which coincides with this sum by the conditional independence 
assumption of Remark~\ref{rem:spatial-independence}):
\[
\log \mathbb{P}_{\theta}(\mathbf{Y}_{train} \mid 
\mathbf{X}_{train})
= \log \prod_{(i,j,t) \in \Omega_{train}} 
\mathbb{P}_{\theta}(Y_{i,j}^{(t)} \mid \mathbf{X}_{i,j}^{(t)})
= \sum_{(i,j,t) \in \Omega_{train}} 
\log \mathbb{P}_{\theta}(Y_{i,j}^{(t)} \mid 
\mathbf{X}_{i,j}^{(t)}).
\]
Negating and identifying this as the empirical approximation 
to $-\mathbb{E}_{\mathbf{X},Y \sim \mathbb{P}}[\log 
\mathbb{P}_{\theta}(Y \mid \mathbf{X})]$, we obtain the 
training objective --- the negative log-likelihood (NLL):
\[
\theta^* \approx \arg\min_{\theta}\, 
\mathcal{L}_{NLL}(\theta), \qquad
\mathcal{L}_{NLL}(\theta) = -\sum_{(i,j,t) \in \Omega_{train}} 
\log \mathbb{P}_{\theta}(Y_{i,j}^{(t)} \mid 
\mathbf{X}_{i,j}^{(t)}).
\]

\subsubsection{Baseline Poisson Approximation}

\begin{definition}[Poisson Parametrization]\label{def:poisson}
The neural network $f_{NN}(\cdot;\theta): \mathbb{R}^7 \to 
\mathbb{R}$ induces the model
\[
Y_{i,j}^{(t)} \mid \mathbf{X}_{i,j}^{(t)} \sim 
\mathrm{Poisson}(\lambda_{i,j}^{(t)}),
\]
with the intensity parametrized via the exponential link:
\[
\lambda_{i,j}^{(t)} = \exp\!\big(f_{NN}
(\mathbf{X}_{i,j}^{(t)};\theta)\big) > 0.
\]
The exponential link guarantees $\lambda_{i,j}^{(t)} > 0$ for 
all inputs without imposing explicit constraints on the network 
output. The conditional probability mass function is:
\[
\mathbb{P}_{\theta}(Y_{i,j}^{(t)} = y \mid \mathbf{X}_{i,j}^{(t)}) 
= \frac{(\lambda_{i,j}^{(t)})^y\, 
e^{-\lambda_{i,j}^{(t)}}}{y!}, \qquad y \in \mathbb{N}_0.
\]
Taking the logarithm:
\[
\log \mathbb{P}_{\theta}(Y_{i,j}^{(t)} = y \mid 
\mathbf{X}_{i,j}^{(t)}) 
= y \log \lambda_{i,j}^{(t)} 
- \lambda_{i,j}^{(t)} - \log(y!).
\]
The term $\log(y!)$ does not depend on $\theta$. Dropping it 
and negating to obtain a minimization objective, the per-sample 
contribution to the NLL becomes $\lambda_{i,j}^{(t)} - 
Y_{i,j}^{(t)}\log\lambda_{i,j}^{(t)}$. Summing over 
$\Omega_{train}$ yields the Poisson loss:
\[
\mathcal{L}_{Poisson}(\theta) 
= \sum_{(i,j,t) \in \Omega_{train}} 
\!\Big( \lambda_{i,j}^{(t)} 
- Y_{i,j}^{(t)} \log \lambda_{i,j}^{(t)} \Big).
\]
\end{definition}

The defining property of the Poisson distribution is 
equidispersion: the conditional mean and conditional variance 
coincide:
\[
\mathbb{E}[Y_{i,j}^{(t)} \mid \mathbf{X}_{i,j}^{(t)}] 
= \mathrm{Var}[Y_{i,j}^{(t)} \mid \mathbf{X}_{i,j}^{(t)}] 
= \lambda_{i,j}^{(t)}.
\]

\begin{theorem}[Overdispersion from Latent 
Heterogeneity]\label{thm:overdispersion}
Suppose there exists a $\sigma$-algebra $\mathcal{F}^*$ with 
$\sigma(\mathbf{X}) \subset \mathcal{F}^*$, and an 
$\mathcal{F}^*$-measurable random variable $\lambda^* > 0$ such 
that $Y \mid \mathcal{F}^* \sim \mathrm{Poisson}(\lambda^*)$. 
If $\lambda^*$ is not $\sigma(\mathbf{X})$-measurable, then
\[
\mathrm{Var}[Y \mid \mathbf{X}] > \mathbb{E}[Y \mid \mathbf{X}].
\]
\end{theorem}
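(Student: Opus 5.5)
The plan is to use the law of total variance, conditioning first on $\mathcal{F}^*$ and then on $\sigma(\mathbf{X})$. Since $\sigma(\mathbf{X}) \subset \mathcal{F}^*$, the tower property gives
\[
\mathbb{E}[Y \mid \mathbf{X}] = \mathbb{E}\big[\mathbb{E}[Y \mid \mathcal{F}^*] \mid \mathbf{X}\big] = \mathbb{E}[\lambda^* \mid \mathbf{X}],
\]
because the conditional Poisson law has mean $\lambda^*$. The conditional variance decomposes as
\[
\mathrm{Var}[Y \mid \mathbf{X}] = \mathbb{E}\big[\mathrm{Var}[Y\mid\mathcal{F}^*] \mid \mathbf{X}\big] + \mathrm{Var}\big[\mathbb{E}[Y\mid\mathcal{F}^*] \mid \mathbf{X}\big] = \mathbb{E}[\lambda^*\mid \mathbf{X}] + \mathrm{Var}[\lambda^* \mid \mathbf{X}],
\]
using equidispersion of the Poisson law, $\mathrm{Var}[Y\mid\mathcal{F}^*]=\lambda^*$. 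I will verify this decomposition by expanding $\mathbb{E}[(Y-\mathbb{E}[Y\mid\mathbf{X}])^2\mid\mathbf{X}]$ around $\lambda^*$. The cross term vanishes because, conditionally on $\mathcal{F}^*$, the factor $\lambda^* - \mathbb{E}[\lambda^*\mid\mathbf{X}]$ is $\mathcal{F}^*$-measurable and $\mathbb{E}[Y-\lambda^*\mid\mathcal{F}^*]=0$. This requires $\mathbb{E}[(\lambda^*)^2]<\infty$, which I will state as a standing integrability assumption; it is equivalent to $\mathbb{E}[Y^2]<\infty$ for a mixed Poisson law.

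The conclusion then reduces to showing $\mathrm{Var}[\lambda^*\mid\mathbf{X}]>0$. Here the statement needs care. Pointwise strict inequality cannot hold almost surely in general: $\lambda^*$ may fail to be $\sigma(\mathbf{X})$-measurable while $\mathrm{Var}[\lambda^*\mid\mathbf{X}]=0$ on some $\mathbf{X}$-event of positive probability. I will therefore prove the statement in the sense that the inequality holds on an event of positive probability, and also in unconditional expectation: $\mathbb{E}\,\mathrm{Var}[Y\mid\mathbf{X}] > \mathbb{E}\,\mathbb{E}[Y\mid\mathbf{X}]$.

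The key step, which I expect to be the main obstacle, is the contrapositive. Suppose $\mathrm{Var}[\lambda^*\mid\mathbf{X}]=0$ almost surely. Then $\mathbb{E}\big[(\lambda^*-\mathbb{E}[\lambda^*\mid\mathbf{X}])^2\big]=0$, so $\lambda^*=\mathbb{E}[\lambda^*\mid\mathbf{X}]$ almost surely. This makes $\lambda^*$ equal almost surely to a $\sigma(\mathbf{X})$-measurable variable, hence $\sigma(\mathbf{X})$-measurable modulo null sets, contradicting the hypothesis. For this to be rigorous, the hypothesis must be read as ``$\lambda^*$ is not a.s.\ equal to any $\sigma(\mathbf{X})$-measurable variable,'' or $\sigma(\mathbf{X})$ must be taken completed; I will adopt that reading explicitly.

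Under this reading, $\mathrm{Var}[\lambda^*\mid\mathbf{X}]$ is a nonnegative random variable that is not almost surely zero. It is therefore strictly positive on a set of positive probability, and its expectation is positive. Combining this with the decomposition above yields the overdispersion $\mathrm{Var}[Y\mid\mathbf{X}]>\mathbb{E}[Y\mid\mathbf{X}]$ on that set, and strictly in expectation.
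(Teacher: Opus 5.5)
Your argument is the paper's proof: the same total-variance decomposition $\mathrm{Var}[Y\mid\mathbf{X}]=\mathbb{E}[\lambda^*\mid\mathbf{X}]+\mathrm{Var}[\lambda^*\mid\mathbf{X}]$, followed by the same observation that $\mathrm{Var}[\lambda^*\mid\mathbf{X}]=0$ a.s.\ forces $\lambda^*=\mathbb{E}[\lambda^*\mid\mathbf{X}]$ a.s.

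Your caveats are correct, and they repair the paper's final step. The paper goes from ``$\lambda^*$ is not $\sigma(\mathbf{X})$-measurable'' straight to $\mathrm{Var}[\lambda^*\mid\mathbf{X}]>0$, but the hypothesis only gives $\mathbb{P}(\mathrm{Var}[\lambda^*\mid\mathbf{X}]>0)>0$. For a counterexample to the pointwise claim, take $\mathbf{X}$ uniform on $\{0,1\}$, $\lambda^*=1$ on $\{\mathbf{X}=0\}$ and $\lambda^*=1+U$ on $\{\mathbf{X}=1\}$, with $U$ uniform and independent of $\mathbf{X}$. Then $\mathrm{Var}[Y\mid\mathbf{X}]=\mathbb{E}[Y\mid\mathbf{X}]$ on an event of probability $1/2$. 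The paper's proof also tacitly relies on $\mathbb{E}[(\lambda^*)^2]<\infty$ and on reading the hypothesis modulo null sets; you state both explicitly.

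So your conclusion (strict inequality on an event of positive probability, and in expectation) is exactly what this argument delivers. The pointwise a.s.\ inequality as literally stated would need a stronger hypothesis: for no $A\in\sigma(\mathbf{X})$ with $\mathbb{P}(A)>0$ does $\lambda^*$ agree a.s.\ on $A$ with a $\sigma(\mathbf{X})$-measurable variable. That condition is equivalent to $\mathrm{Var}[\lambda^*\mid\mathbf{X}]>0$ a.s.
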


\begin{proof}
We apply the law of total variance, which holds for any 
sub-$\sigma$-algebras $\sigma(\mathbf{X}) \subseteq 
\mathcal{F}^*$:
\[
\mathrm{Var}[Y \mid \mathbf{X}]
= \mathbb{E}\big[\mathrm{Var}(Y \mid \mathcal{F}^*) 
\mid \mathbf{X}\big]
+ \mathrm{Var}\big(\mathbb{E}[Y \mid \mathcal{F}^*] 
\mid \mathbf{X}\big).
\]
Since $Y \mid \mathcal{F}^* \sim \mathrm{Poisson}(\lambda^*)$ 
and $\lambda^*$ is $\mathcal{F}^*$-measurable, the Poisson 
moment identities established in 
Definition~\ref{def:poisson} give:
\[
\mathbb{E}[Y \mid \mathcal{F}^*] = \lambda^*, \qquad
\mathrm{Var}(Y \mid \mathcal{F}^*) = \lambda^*.
\]
Substituting into the law of total variance:
\[
\mathrm{Var}[Y \mid \mathbf{X}]
= \mathbb{E}[\lambda^* \mid \mathbf{X}]
+ \mathrm{Var}[\lambda^* \mid \mathbf{X}].
\]
We identify the first term as the conditional mean of $Y$. 
By the tower property, since $\sigma(\mathbf{X}) \subseteq 
\mathcal{F}^*$:
\[
\mathbb{E}[Y \mid \mathbf{X}]
= \mathbb{E}\big[\mathbb{E}[Y \mid \mathcal{F}^*] 
\mid \mathbf{X}\big]
= \mathbb{E}[\lambda^* \mid \mathbf{X}]
=: \mu(\mathbf{X}).
\]
Hence:
\[
\mathrm{Var}[Y \mid \mathbf{X}]
= \mu(\mathbf{X}) + \mathrm{Var}[\lambda^* \mid \mathbf{X}].
\]
It remains to show that $\mathrm{Var}[\lambda^* \mid 
\mathbf{X}] > 0$. By definition, for any square-integrable 
random variable $Z$:
\[
\mathrm{Var}[Z \mid \mathbf{X}] = 0 \quad \Longleftrightarrow 
\quad Z = \mathbb{E}[Z \mid \mathbf{X}] \quad 
\mathbb{P}\text{-a.s.},
\]
which holds if and only if $Z$ is $\sigma(\mathbf{X})$-measurable 
$\mathbb{P}$-a.s. By hypothesis, $\lambda^*$ is \emph{not} 
$\sigma(\mathbf{X})$-measurable, so
\[
\mathrm{Var}[\lambda^* \mid \mathbf{X}] > 0.
\]
Therefore:
\[
\mathrm{Var}[Y \mid \mathbf{X}]
= \mu(\mathbf{X}) + \mathrm{Var}[\lambda^* \mid \mathbf{X}]
> \mu(\mathbf{X})
= \mathbb{E}[Y \mid \mathbf{X}].
\]
\end{proof}

\begin{remark}
Theorem~\ref{thm:overdispersion} shows that overdispersion is 
a structural consequence of incomplete observable information, 
not an empirical artifact. In the seismological context, 
$\lambda^*$ represents the unobservable local seismogenic 
potential --- fluctuations in fault segment activation, 
aftershock cascades, and fluid migration --- which remain 
latent even after conditioning on the feature vector 
$\mathbf{X}$. The Negative Binomial distribution, introduced 
in Section~\ref{sec:nb}, provides an analytically tractable 
marginal model for $Y$ under a Gamma prior on $\lambda^*$.
\end{remark}

The empirical diagnostics associated with 
Theorem~\ref{thm:overdispersion} are reported in 
Figure~\ref{fig:overdispersion-diagnostics}. We use three 
explicit graphical designations: 
Figure~\ref{fig:target-count-tail-diagnostic} for the marginal 
tail of the count target, 
Figure~\ref{fig:local-dispersion-index-diagnostic} for the 
local dispersion index, and 
Figure~\ref{fig:mean-variance-diagnostic} for the 
mean--variance relationship.

\begin{figure}[!htbp]
    \centering
    \begin{subfigure}[t]{0.47\linewidth}
        \centering
        \includegraphics[width=\linewidth]{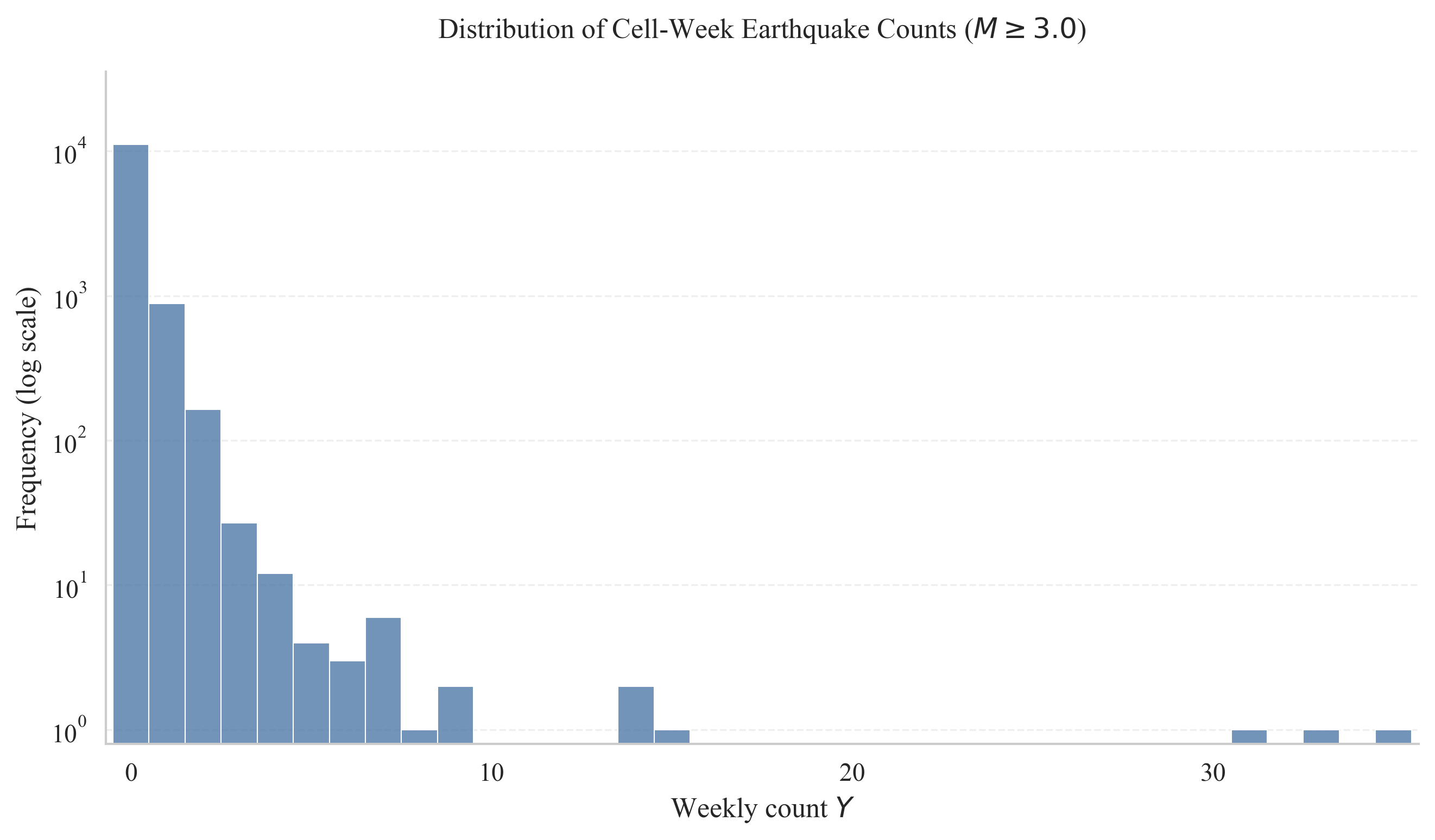}
        \caption{Target-count tail diagnostic.}
        \label{fig:target-count-tail-diagnostic}
    \end{subfigure}\hfill
    \begin{subfigure}[t]{0.47\linewidth}
        \centering
        \includegraphics[width=\linewidth]{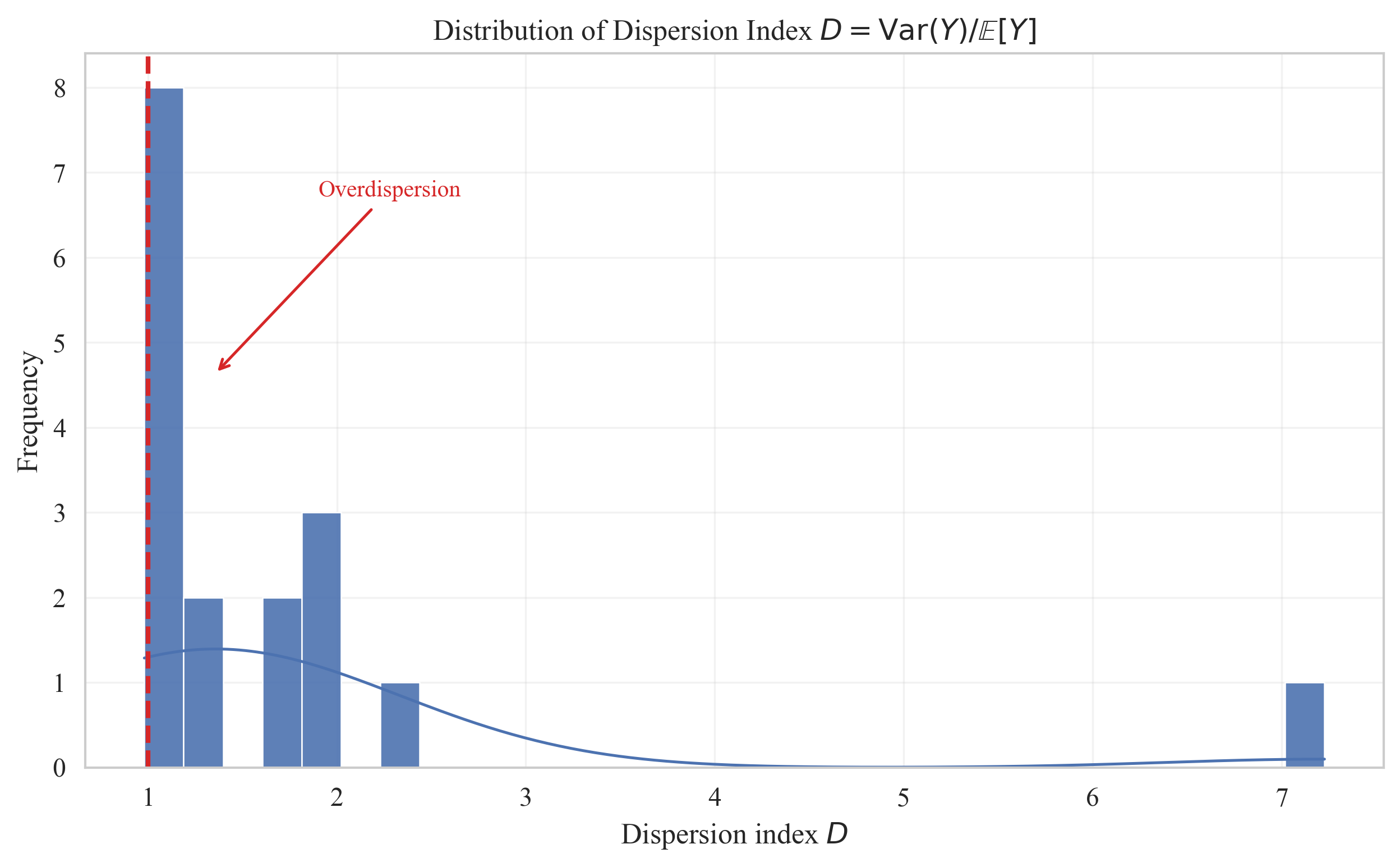}
        \caption{Local dispersion index diagnostic.}
        \label{fig:local-dispersion-index-diagnostic}
    \end{subfigure}

    \vspace{0.6em}
    \begin{subfigure}[t]{0.56\linewidth}
        \centering
        \includegraphics[width=\linewidth]{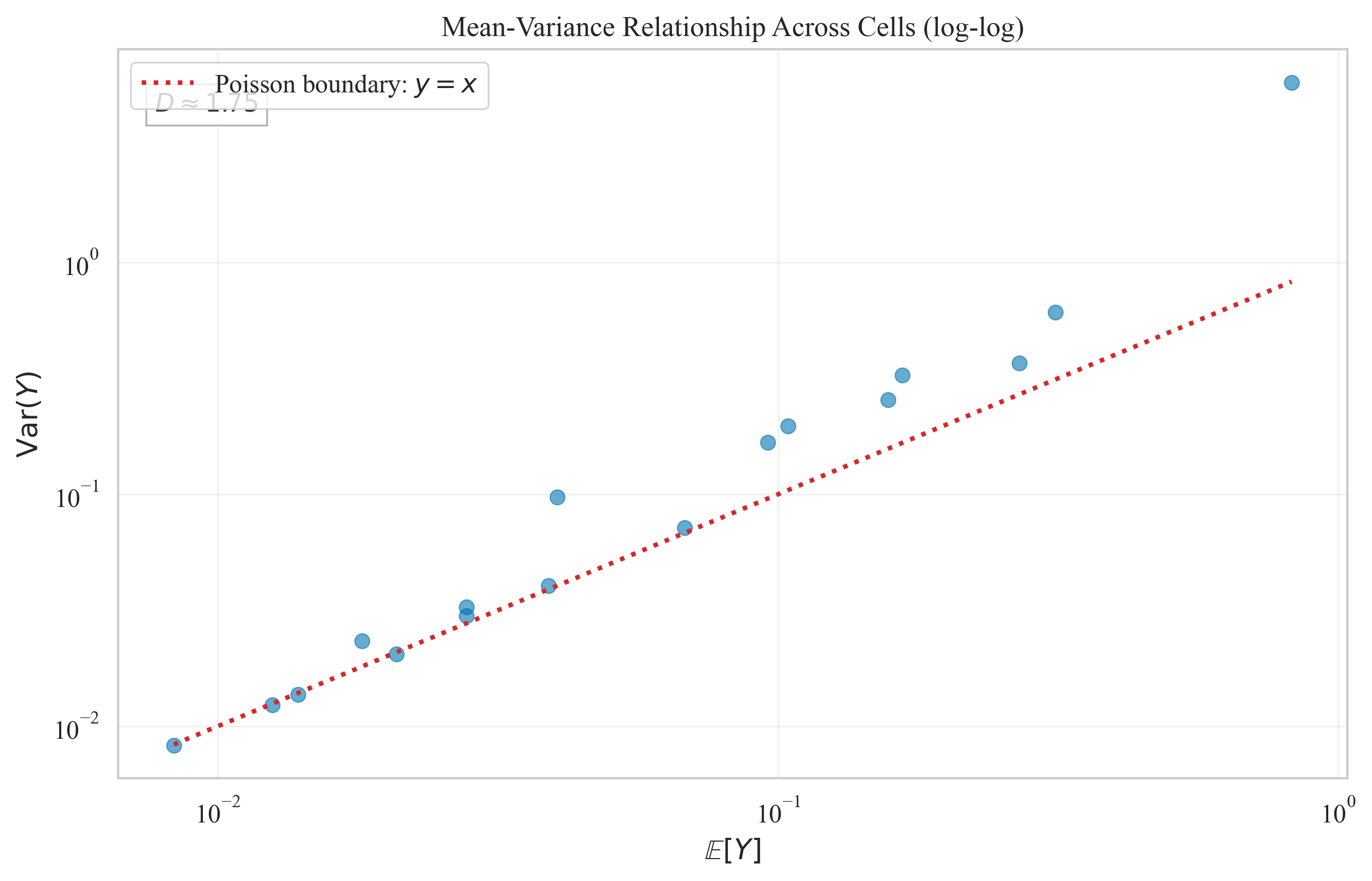}
        \caption{Mean--variance diagnostic.}
        \label{fig:mean-variance-diagnostic}
    \end{subfigure}
    \caption{Overdispersion diagnostics following 
    Theorem~\ref{thm:overdispersion}: the heavy-tailed marginal 
    distribution of $Y$, the prevalence of local dispersion 
    indices $D=\mathrm{Var}(Y)/\mathbb{E}(Y)>1$, and the 
    systematic deviation from the Poisson relation 
    $\mathrm{Var}=\mathbb{E}$.}
    \label{fig:overdispersion-diagnostics}
\end{figure}

The target-count tail diagnostic in 
Figure~\ref{fig:target-count-tail-diagnostic} shows a heavy 
right tail incompatible with rapid Poisson decay. The local 
dispersion index diagnostic in 
Figure~\ref{fig:local-dispersion-index-diagnostic} confirms 
$D>1$ for most grid cells, while the mean--variance diagnostic 
in Figure~\ref{fig:mean-variance-diagnostic} shows systematic 
departure from the Poisson equality $\mathrm{Var}=\mathbb{E}$.

Consequently, $\mathcal{L}_{Poisson}$ systematically 
under-specifies the model: equidispersion understates the 
probability of extreme events. A transition to distribution 
families with separate mean and dispersion parametrization is 
therefore required.

\subsubsection{Gamma--Poisson Mixture and the Negative 
Binomial Distribution}\label{sec:nb}

In the seismological context, the latent intensity $\lambda^*$ 
reflects unobservable variations in local seismogenic potential 
--- activation of fault segments, aftershock sequences, and 
swarm activity. The Gamma distribution for $\lambda^*$ is the 
conjugate prior to the Poisson observation model and admits 
an analytic marginalization.

\begin{proposition}[NB as the Marginal Distribution of the 
Gamma--Poisson Mixture]\label{prop:nb-mixture}
Let the random effect $\lambda^* \sim \mathrm{Gamma}(r, \beta)$, 
with shape $r > 0$ and rate $\beta > 0$, and let 
$Y \mid \lambda^* \sim \mathrm{Poisson}(\lambda^*)$. Then the 
marginal distribution of $Y$ is Negative Binomial:
\[
\mathbb{P}(Y = y) = \frac{\Gamma(y+r)}{\Gamma(y+1)\,\Gamma(r)}
(1-p)^r p^y, \quad y \in \mathbb{N}_0, \quad 
p = (1+\beta)^{-1}.
\]
\end{proposition}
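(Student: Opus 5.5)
Since $Y$ is discrete and $\lambda^*$ has a density, the plan is to compute the marginal probability mass function directly by integrating out the random effect:
\[
\mathbb{P}(Y=y)=\int_0^\infty \mathbb{P}(Y=y\mid\lambda^*=\lambda)\,f_{\lambda^*}(\lambda)\,d\lambda
=\int_0^\infty \frac{\lambda^y e^{-\lambda}}{y!}\cdot\frac{\beta^r}{\Gamma(r)}\lambda^{r-1}e^{-\beta\lambda}\,d\lambda .
\]
Interchanging the sum over $y$ with the integral is not needed, since the identity holds pointwise in $y$. All integrands are nonnegative, so Tonelli's theorem justifies writing the joint law of $(Y,\lambda^*)$ as the product of the Poisson kernel and the Gamma density. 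The Poisson kernel is the same one used in Definition~\ref{def:poisson}, and the Gamma density is written in the shape--rate parametrization.

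Next I will collect the powers of $\lambda$ and the exponentials, which gives the integrand $\lambda^{y+r-1}e^{-(1+\beta)\lambda}$. This is an unnormalized Gamma$(y+r,\,1+\beta)$ density, so the integral equals $\Gamma(y+r)/(1+\beta)^{y+r}$. This is the only genuinely analytic step. It follows from the substitution $u=(1+\beta)\lambda$ together with the definition of the Gamma function, and it is valid because $y+r>0$.

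Substituting back and using $y!=\Gamma(y+1)$ yields
\[
\mathbb{P}(Y=y)=\frac{\Gamma(y+r)}{\Gamma(y+1)\Gamma(r)}\Big(\frac{\beta}{1+\beta}\Big)^r\Big(\frac{1}{1+\beta}\Big)^y .
\]
With $p=(1+\beta)^{-1}$ we have $1-p=\beta/(1+\beta)$, so this is exactly the stated form $(1-p)^r p^y$.

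As a sanity check I will verify that the probabilities sum to one. By the generalized binomial series, $\sum_y \frac{\Gamma(y+r)}{\Gamma(y+1)\Gamma(r)}p^y=(1-p)^{-r}$ for $p<1$, which makes the normalization consistent. The main point needing care is bookkeeping rather than difficulty: the convention for $p$ must be kept straight. Here $p$ is the weight on $y$, not the "success probability" of the other common NB convention, so I will state the convention explicitly to avoid a sign/complement slip.
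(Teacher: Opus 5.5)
Your proposal is correct and follows the paper's proof step for step. You apply the law of total probability against the shape--rate Gamma density, collect the integrand into $\lambda^{y+r-1}e^{-(1+\beta)\lambda}$, evaluate it as $\Gamma(y+r)/(1+\beta)^{y+r}$, and rewrite $\beta^r/(1+\beta)^{y+r}$ as $(1-p)^r p^y$ with $p=(1+\beta)^{-1}$; your normalization check via the generalized binomial series is a harmless extra that the paper omits.
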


\begin{proof}
The density of the random effect 
$\lambda^* \sim \mathrm{Gamma}(r,\beta)$ is:
\[
f_{\lambda^*}(\ell) = \frac{\beta^r}{\Gamma(r)}\,
\ell^{\,r-1} e^{-\beta \ell}, \qquad \ell > 0.
\]
By the law of total probability:
\[
\mathbb{P}(Y = y) = \int_0^\infty 
\mathbb{P}(Y = y \mid \lambda^* = \ell)\,
f_{\lambda^*}(\ell)\,d\ell.
\]
Substituting the Poisson pmf and the Gamma density:
\[
\mathbb{P}(Y = y)
= \int_0^\infty \frac{\ell^y e^{-\ell}}{y!}\cdot
\frac{\beta^r}{\Gamma(r)}\,\ell^{r-1}e^{-\beta\ell}\,d\ell
= \frac{\beta^r}{y!\,\Gamma(r)}
\int_0^\infty \ell^{\,y+r-1}\,
e^{-(1+\beta)\ell}\,d\ell.
\]
The remaining integral is a standard Gamma integral. 
Recognizing that for $a > 0$:
\[
\int_0^\infty \ell^{\,y+r-1} e^{-a\ell}\,d\ell 
= \frac{\Gamma(y+r)}{a^{y+r}},
\]
and setting $a = 1 + \beta$:
\[
\mathbb{P}(Y = y)
= \frac{\beta^r}{y!\,\Gamma(r)}\cdot
\frac{\Gamma(y+r)}{(1+\beta)^{y+r}}
= \frac{\Gamma(y+r)}{\Gamma(y+1)\,\Gamma(r)}\cdot
\frac{\beta^r}{(1+\beta)^{y+r}}.
\]
It remains to rewrite the last factor in terms of 
$p = (1+\beta)^{-1}$, so that $1-p = \beta/(1+\beta)$:
\[
\frac{\beta^r}{(1+\beta)^{y+r}}
= \left(\frac{\beta}{1+\beta}\right)^r
\cdot\left(\frac{1}{1+\beta}\right)^y
= (1-p)^r\,p^y.
\]
Substituting:
\[
\mathbb{P}(Y = y)
= \frac{\Gamma(y+r)}{\Gamma(y+1)\,\Gamma(r)}
(1-p)^r p^y,
\]
which is the Negative Binomial pmf with parameters 
$(r, p)$.
\end{proof}

\begin{remark}[Physical Interpretation]
The continuous Gamma mixture of Poisson processes provides 
a natural mechanism for the overdispersion observed in 
seismic data: even when the macroscopic feature vector 
$\mathbf{X}$ is fixed, the true local intensity fluctuates 
due to unobservable geophysical processes --- Coulomb stress 
transfer, fluid migration, and cascading activation of fault 
sub-segments. The shape parameter $r$ governs the degree of 
this latent heterogeneity: as $r \to \infty$ (equivalently 
$\alpha = r^{-1} \to 0$), the Gamma mixing distribution 
concentrates around its mean and the NB distribution converges 
to Poisson, recovering the equidispersion limit.
\end{remark}

\subsubsection{Reparametrization for Deep Learning}

\begin{proposition}[Moments of NB and $(\mu,\alpha)$-Parametrization]\label{prop:nb-reparam}
For $\mathrm{NB}(r,\beta)$ with $\alpha := r^{-1}$ and 
$\mu := r/\beta$, the variance satisfies
\[
\mathrm{Var}[Y] = \mu + \alpha\mu^2.
\]
The probability mass function in the $(\mu,\alpha)$ 
coordinates is:
\[
\mathbb{P}(Y = y \mid \mu, \alpha) = 
\frac{\Gamma(y+\alpha^{-1})}{\Gamma(y+1)\,\Gamma(\alpha^{-1})}
\left(\frac{1}{1+\alpha\mu}\right)^{\alpha^{-1}}
\left(\frac{\alpha\mu}{1+\alpha\mu}\right)^y.
\]
\end{proposition}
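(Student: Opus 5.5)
The plan is to work directly from the Gamma--Poisson mixture representation of Proposition~\ref{prop:nb-mixture}, so that both the moment identity and the reparametrized pmf follow by short manipulations.

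\textbf{Step 1: moments via the law of total variance.} Write $Y \mid \lambda^* \sim \mathrm{Poisson}(\lambda^*)$ with $\lambda^* \sim \mathrm{Gamma}(r,\beta)$. The Gamma moments are $\mathbb{E}[\lambda^*] = r/\beta$ and $\mathrm{Var}[\lambda^*] = r/\beta^2$. By the tower property, $\mathbb{E}[Y] = \mathbb{E}[\lambda^*] = r/\beta = \mu$. By the same total-variance decomposition used in the proof of Theorem~\ref{thm:overdispersion},
\[
\mathrm{Var}[Y] = \mathbb{E}[\lambda^*] + \mathrm{Var}[\lambda^*] = \frac{r}{\beta} + \frac{r}{\beta^2}.
\]
It remains to rewrite the second term in the new coordinates. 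Since $\beta = r/\mu$ and $r = \alpha^{-1}$, we have
\[
\frac{r}{\beta^2} = \frac{\mu^2}{r} = \alpha\mu^2,
\]
which gives $\mathrm{Var}[Y] = \mu + \alpha\mu^2$. As an alternative route, the standard NB mean $rp/(1-p)$ and variance $rp/(1-p)^2$ with $p=(1+\beta)^{-1}$ reproduce $r/\beta$ and $r(1+\beta)/\beta^2$, which serves as a consistency check.

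\textbf{Step 2: the pmf in $(\mu,\alpha)$ coordinates.} Substitute into the pmf of Proposition~\ref{prop:nb-mixture}. With $r=\alpha^{-1}$ and $\beta = r/\mu = 1/(\alpha\mu)$,
\[
p = \frac{1}{1+\beta} = \frac{\alpha\mu}{1+\alpha\mu}, \qquad 1-p = \frac{1}{1+\alpha\mu}.
\]
Inserting these into $\frac{\Gamma(y+r)}{\Gamma(y+1)\Gamma(r)}(1-p)^r p^y$ yields the stated formula directly.

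\textbf{Main obstacle.} There is no real difficulty; the only care needed is bookkeeping of conventions. In Proposition~\ref{prop:nb-mixture}, $p$ is the "success" weight attached to $y$, not to $r$, and the Gamma is parametrized by its rate. I would therefore verify once that $\beta = 1/(\alpha\mu)$ gives $\beta/(1+\beta) = 1/(1+\alpha\mu)$, so that no factor is inverted.
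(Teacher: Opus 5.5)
Your proposal is correct and follows the paper's proof essentially step for step. You get the mean from the tower property and the variance from the law of total variance with the Gamma moments $r/\beta$ and $r/\beta^2$, then obtain the pmf by substituting $r=\alpha^{-1}$ and $\beta=1/(\alpha\mu)$ into the canonical pmf of Proposition~\ref{prop:nb-mixture}, using $p=\alpha\mu/(1+\alpha\mu)$ and $1-p=1/(1+\alpha\mu)$. Your extra consistency check via $rp/(1-p)$ and $rp/(1-p)^2$ is correct but not needed.
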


\begin{proof}
\textbf{Mean.} By the tower property and the moment 
identity $\mathbb{E}[\lambda^*] = r/\beta$ for 
$\lambda^* \sim \mathrm{Gamma}(r,\beta)$:
\[
\mu = \mathbb{E}[Y] 
= \mathbb{E}\big[\mathbb{E}[Y \mid \lambda^*]\big] 
= \mathbb{E}[\lambda^*] = \frac{r}{\beta}.
\]

\textbf{Variance.} By the law of total variance and 
the Poisson identity 
$\mathrm{Var}(Y \mid \lambda^*) = \mathbb{E}[Y \mid \lambda^*] 
= \lambda^*$:
\[
\mathrm{Var}[Y] 
= \mathbb{E}\big[\mathrm{Var}(Y \mid \lambda^*)\big] 
+ \mathrm{Var}\big(\mathbb{E}[Y \mid \lambda^*]\big)
= \mathbb{E}[\lambda^*] + \mathrm{Var}(\lambda^*).
\]
Substituting the Gamma moments 
$\mathbb{E}[\lambda^*] = r/\beta$ and 
$\mathrm{Var}(\lambda^*) = r/\beta^2$:
\[
\mathrm{Var}[Y] = \frac{r}{\beta} + \frac{r}{\beta^2}
= \mu + \frac{\mu^2}{r}.
\]
Setting $\alpha := r^{-1}$ yields 
$\mathrm{Var}[Y] = \mu + \alpha\mu^2$.

\textbf{Reparametrization of the pmf.} From the 
definitions $r = \alpha^{-1}$ and $\mu = r/\beta$, we solve 
for the canonical parameters:
\[
\beta = \frac{r}{\mu} = \frac{1}{\alpha\mu}.
\]
Substituting into $p = (1+\beta)^{-1}$ from 
Proposition~\ref{prop:nb-mixture}:
\[
p = \frac{1}{1 + \beta} = \frac{1}{1 + (\alpha\mu)^{-1}} 
= \frac{\alpha\mu}{1+\alpha\mu}, \qquad
1 - p = \frac{1}{1+\alpha\mu}.
\]
Substituting $r = \alpha^{-1}$, $p$, and $1-p$ into the 
canonical NB pmf of Proposition~\ref{prop:nb-mixture}:
\[
\mathbb{P}(Y = y \mid \mu,\alpha)
= \frac{\Gamma(y+\alpha^{-1})}{\Gamma(y+1)\,\Gamma(\alpha^{-1})}
\left(\frac{1}{1+\alpha\mu}\right)^{\alpha^{-1}}
\left(\frac{\alpha\mu}{1+\alpha\mu}\right)^y.
\]
\end{proof}

\begin{remark}
For $\alpha > 0$, the variance scales quadratically with the 
mean; as $\alpha \to 0$, the Gamma mixing distribution 
concentrates and the model converges asymptotically to the 
Poisson process, recovering equidispersion 
$\mathrm{Var}[Y] \to \mu$.
\end{remark}

\begin{definition}[Neural NB Model]\label{def:neural-nb}
The neural network $f_{NN}(\mathbf{X};\theta)$ outputs the 
local distribution parameters:
\[
\begin{bmatrix} \mu_{\theta}(\mathbf{X}) \\ 
\alpha_{\theta}(\mathbf{X}) \end{bmatrix} 
= f_{NN}(\mathbf{X};\theta), \qquad 
\mu_{\theta} > 0,\quad \alpha_{\theta} > 0.
\]
The induced conditional measure is:
\[
\mathbb{P}_{\theta}(Y = y \mid \mathbf{X}) 
:= \mathbb{P}\Big(Y = y \;\Big|\; 
\mu = \mu_{\theta}(\mathbf{X}),\;
\alpha = \alpha_{\theta}(\mathbf{X})\Big).
\]
The NLL training objective is:
\[
\mathcal{L}_{NB}(\theta) = -\sum_{(i,j,t) \in \Omega_{train}} 
\log \mathbb{P}_{\theta}\!\left(Y_{i,j}^{(t)} \mid 
\mathbf{X}_{i,j}^{(t)}\right).
\]
\end{definition}

Each spatial cell $(i,j) \in \mathcal{S}_{active}$ is 
re-indexed via the bijection
\[
g: \mathcal{S}_{active} \to \mathcal{C}, \qquad 
\mathcal{C} = \{1,\ldots,N_c\},
\]
setting $c = g(i,j)$ and defining
\[
Y_c^{(t)} := Y_{i,j}^{(t)}, \qquad 
\mathbf{X}_c^{(t)} := \mathbf{X}_{i,j}^{(t)}.
\]
The empirical risk then takes the form:
\[
\mathcal{L}_{NB}(\theta) = 
-\sum_{(c,t)\in\Omega_{train}} 
\ln \mathbb{P}_{\theta}\!\big(Y_c^{(t)}\mid 
\mathbf{X}_c^{(t)}\big).
\]

The architecture $f_{NN}$ incorporates a trainable spatial 
embeddings layer, which acts as generalized random effects: 
each cell receives a latent representation encoding its 
hidden geological structure --- proximity to fault systems, 
local tectonic regime, and other factors not captured by 
the observed feature vector $\mathbf{X}$.

\begin{definition}[Architecture of $f_{NN}$]\label{def:architecture}
The forward pass is defined by the composition
\[
f_{NN} = f_{sp} \circ f_{out} \circ f_2 \circ f_1 \circ 
f_{cat} \circ f_{emb},
\]
where $d$ denotes the dimension of the numerical feature 
vector and $d_e = 8$ is the embedding dimension.

\textbf{Spatial index embedding.} Each cell index 
$c \in \mathcal{C}$ is mapped to a trainable dense vector 
via row lookup in the embedding matrix 
$\mathbf{E} \in \mathbb{R}^{N_c \times d_e}$:
\[
f_{emb}: \mathcal{C} \to \mathbb{R}^{d_e}, \quad 
c \mapsto \mathbf{e}_c = \mathbf{E}[c,:].
\]

\textbf{Concatenation.} The embedding and the feature vector 
are concatenated to form the joint input:
\[
f_{cat}: \mathbb{R}^{d_e} \times \mathbb{R}^{d} \to 
\mathbb{R}^{d_e+d}, \quad 
(\mathbf{e}_c, \mathbf{x}) \mapsto 
\mathbf{h}_0 = [\mathbf{e}_c \parallel \mathbf{x}].
\]

\textbf{Hidden layers.} Two fully connected layers with 
ReLU activations $\sigma(z) = \max(0,z)$:
\[
f_1: \mathbb{R}^{d_e+d} \to \mathbb{R}^{64}, \quad 
\mathbf{h}_0 \mapsto \mathbf{h}_1 = 
\sigma(\mathbf{W}_1\mathbf{h}_0 + \mathbf{b}_1),
\]
\[
f_2: \mathbb{R}^{64} \to \mathbb{R}^{32}, \quad 
\mathbf{h}_1 \mapsto \mathbf{h}_2 = 
\sigma(\mathbf{W}_2\mathbf{h}_1 + \mathbf{b}_2),
\]
where $\mathbf{W}_1 \in \mathbb{R}^{64 \times (d_e+d)}$, 
$\mathbf{b}_1 \in \mathbb{R}^{64}$, 
$\mathbf{W}_2 \in \mathbb{R}^{32 \times 64}$, 
$\mathbf{b}_2 \in \mathbb{R}^{32}$ are learnable parameters. 
Dropout regularization with rate $0.2$ is applied after each 
hidden layer during training; it is omitted from the formal 
composition above as it is inactive at inference time.

\textbf{Output projection.}
\[
f_{out}: \mathbb{R}^{32} \to \mathbb{R}^{2}, \quad 
\mathbf{h}_2 \mapsto \mathbf{z} = 
\mathbf{W}_{out}\mathbf{h}_2 + \mathbf{b}_{out},
\]
where $\mathbf{W}_{out} \in \mathbb{R}^{2 \times 32}$ and 
$\mathbf{b}_{out} \in \mathbb{R}^{2}$.

\textbf{Softplus output activation.} To guarantee strict 
positivity of $\mu$ and $\alpha$, the logit vector 
$\mathbf{z} = [z_1, z_2]^\top$ is passed through the softplus 
function with a numerical stability constant $\epsilon > 0$:
\[
f_{sp}: \mathbb{R}^{2} \to \mathbb{R}_{>0}^{2}, \quad 
\mathbf{z} \mapsto
\begin{bmatrix} \mu \\ \alpha \end{bmatrix}
= \ln\!\big(\mathbf{1} + \exp(\mathbf{z})\big) + \epsilon.
\]
The softplus function $\ln(1 + e^z)$ is chosen over the 
exponential link because it is approximately linear for 
large positive inputs, avoiding gradient saturation, while 
still guaranteeing positivity. The constant $\epsilon > 0$ 
prevents numerical underflow when $z \ll 0$.

The complete forward pass is therefore:
\[
\begin{bmatrix} \mu \\ \alpha \end{bmatrix}
= (f_{sp} \circ f_{out} \circ f_2 \circ f_1 \circ f_{cat})
\big(f_{emb}(c),\, \mathbf{x}\big).
\]
\end{definition}

\begin{remark}[Interpretation of Spatial Embeddings]
The vector $\mathbf{e}_c \in \mathbb{R}^{d_e}$ is 
functionally analogous to a random effect in a generalized 
linear mixed model (GLMM), with one key distinction: the 
latent representations are not postulated to be Gaussian, 
but are learned end-to-end. This allows the network to 
endogenously group cells by seismotectonic characteristics 
--- fault proximity, local stress regime, recurrence 
patterns --- without expert specification of a spatial 
covariance structure. In the GLMM analogy, $\mathbf{E}$ 
plays the role of the random effect design matrix, but 
with a learned rather than prescribed covariance.
\end{remark}

\subsubsection{Analytical Gradients of the NB Loss Function}

\begin{lemma}[Analytical Gradients of the Local NB 
Loss]\label{lem:gradients}
For an observation $(c,t)$ with $r_c^{(t)}=(\alpha_c^{(t)})^{-1}$ 
and $D_c^{(t)}=1+\alpha_c^{(t)}\mu_c^{(t)}$:
\[
\frac{\partial \mathcal{L}_c^{(t)}}{\partial \mu_c^{(t)}}
=\frac{\mu_c^{(t)}-Y_c^{(t)}}{\mu_c^{(t)}
\big(1+\alpha_c^{(t)}\mu_c^{(t)}\big)},
\]
\[
\frac{\partial \mathcal{L}_c^{(t)}}{\partial \alpha_c^{(t)}}
=\frac{\psi\big(Y_c^{(t)}+r_c^{(t)}\big)
-\psi\big(r_c^{(t)}\big)
-\ln D_c^{(t)}}{(\alpha_c^{(t)})^2}
+\frac{\mu_c^{(t)}-Y_c^{(t)}}{\alpha_c^{(t)}
\big(1+\alpha_c^{(t)}\mu_c^{(t)}\big)},
\]
where $\psi = (\ln\Gamma)'$ denotes the digamma function. 
Under the softplus parametrization 
$\mu_c^{(t)}=\ln(1+e^{z_{1,c}^{(t)}})$ and 
$\alpha_c^{(t)}=\ln(1+e^{z_{2,c}^{(t)}})$, the chain rule 
gives:
\[
\frac{\partial \mathcal{L}_c^{(t)}}{\partial z_{1,c}^{(t)}}
=\frac{\partial \mathcal{L}_c^{(t)}}{\partial \mu_c^{(t)}}
\,\sigma\big(z_{1,c}^{(t)}\big),
\qquad
\frac{\partial \mathcal{L}_c^{(t)}}{\partial z_{2,c}^{(t)}}
=\frac{\partial \mathcal{L}_c^{(t)}}{\partial \alpha_c^{(t)}}
\,\sigma\big(z_{2,c}^{(t)}\big),
\]
where $\sigma(z)=(1+e^{-z})^{-1}$ is the sigmoid function.
\end{lemma}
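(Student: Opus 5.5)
The plan is to differentiate the per-observation loss directly from the $(\mu,\alpha)$-form of the NB pmf in Proposition~\ref{prop:nb-reparam}. The derivative in $\alpha$ is easier in the variables $r=\alpha^{-1}$ and $D=1+\alpha\mu$. Dropping the indices $(c,t)$ and writing $y=Y_c^{(t)}$, I would first expand the local loss $\mathcal{L}=-\ln\mathbb{P}(Y=y\mid\mu,\alpha)$. Taking logarithms of the pmf, and using $\ln\frac{1}{1+\alpha\mu}=-\ln D$ and $\ln\frac{\alpha\mu}{1+\alpha\mu}=\ln\alpha+\ln\mu-\ln D$, gives
\[
\mathcal{L}=-\ln\Gamma(y+r)+\ln\Gamma(y+1)+\ln\Gamma(r)+(r+y)\ln D-y\ln\alpha-y\ln\mu .
\]
This is the only place the pmf is used; everything afterwards is elementary calculus.

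\textbf{Derivative in $\mu$.} Here $r$ is held fixed and only $\ln D$ and $\ln\mu$ depend on $\mu$. Since $\partial_\mu\ln D=\alpha/D$ and $(r+y)\alpha=1+\alpha y$, we get
\[
\partial_\mu\mathcal{L}=\frac{1+\alpha y}{D}-\frac{y}{\mu}.
\]
Over the common denominator $\mu D$ the numerator is $\mu(1+\alpha y)-y(1+\alpha\mu)=\mu-y$. This gives the first formula.

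\textbf{Derivative in $\alpha$.} This is the step needing most care, because $\alpha$ enters both through $r$ and explicitly. I would use $dr/d\alpha=-\alpha^{-2}$ and $\frac{d}{dr}\ln\Gamma=\psi$. The Gamma terms then contribute $\alpha^{-2}[\psi(y+r)-\psi(r)]$. The product $(r+y)\ln D$ contributes two pieces:
\begin{itemize}
\item through $r$, the term $-\alpha^{-2}\ln D$;
\item through $D$, the term $(r+y)\mu/D$.
\end{itemize}
Collecting the $\alpha^{-2}$ terms gives the digamma part of the statement. The remainder is $(r+y)\mu/D-y/\alpha$. Over the denominator $\alpha D$ its numerator is $\mu+\alpha y\mu-y-\alpha y\mu=\mu-y$, which gives the second summand. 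The main obstacle I expect is bookkeeping the signs from $dr/d\alpha<0$, so I would check the result against the Poisson limit. As $\alpha\to0$, $\psi(y+r)-\psi(r)-\ln(1+\mu/r)$ should be $O(r^{-1})$, so the first term stays bounded.

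\textbf{Chain rule through softplus.} Finally I would apply the chain rule through the softplus head of Definition~\ref{def:architecture}. Since $\frac{d}{dz}\ln(1+e^z)=e^z/(1+e^z)=\sigma(z)$, and the additive constant $\epsilon$ has zero derivative, the two stated expressions for $\partial\mathcal{L}/\partial z_{1}$ and $\partial\mathcal{L}/\partial z_{2}$ follow. The $\epsilon$ does matter for the values of $\mu$ and $\alpha$ plugged into the formulas, but not for the Jacobian factor $\sigma(z)$. Because $\mu$ depends only on $z_1$ and $\alpha$ only on $z_2$, the Jacobian of $f_{sp}$ is diagonal and no cross terms arise.
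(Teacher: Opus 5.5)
Your derivation is correct and follows the paper's proof essentially step for step: the same expansion of the log-pmf in $r=\alpha^{-1}$ and $D=1+\alpha\mu$, the same term-by-term differentiation using $dr/d\alpha=-\alpha^{-2}$ and $\alpha r=1$, and the same softplus chain rule with factor $\sigma(z)$; your remark that $\epsilon$ does not affect the Jacobian is a harmless extra. One correction to your optional Poisson-limit sanity check: $\psi(y+r)-\psi(r)-\ln(1+\mu/r)=(y-\mu)/r+O(r^{-2})$, so after dividing by $\alpha^{2}=r^{-2}$ the first term of $\partial\mathcal{L}/\partial\alpha$ grows like $(y-\mu)r$ and does not stay bounded. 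It is cancelled by the second term, which grows like $(\mu-y)r$, and only their sum stays bounded, tending to $-\tfrac12\bigl[(y-\mu)^2-y\bigr]$.
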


\begin{proof}
Set $\mathcal{L}_c^{(t)} = -\ell_c^{(t)}$, where 
$\ell_c^{(t)}$ is the local log-likelihood. Taking the 
logarithm of the NB pmf from 
Proposition~\ref{prop:nb-reparam}:
\[
\ell_c^{(t)}
=\ln\Gamma\!\big(Y_c^{(t)}+r_c^{(t)}\big)
-\ln\Gamma\!\big(Y_c^{(t)}+1\big)
-\ln\Gamma\!\big(r_c^{(t)}\big)
+r_c^{(t)}\ln\!\left(\frac{1}{D_c^{(t)}}\right)
+Y_c^{(t)}\ln\!\left(\frac{\alpha_c^{(t)}\mu_c^{(t)}}
{D_c^{(t)}}\right).
\]
Expanding the logarithms using 
$\ln(1/D) = -\ln D$ and 
$\ln(\alpha\mu/D) = \ln\alpha + \ln\mu - \ln D$:
\[
\ell_c^{(t)}
=\ln\Gamma\!\big(Y_c^{(t)}+r_c^{(t)}\big)
-\ln\Gamma\!\big(Y_c^{(t)}+1\big)
-\ln\Gamma\!\big(r_c^{(t)}\big)
-\big(Y_c^{(t)}+r_c^{(t)}\big)\ln D_c^{(t)}
+Y_c^{(t)}\ln\alpha_c^{(t)}
+Y_c^{(t)}\ln\mu_c^{(t)}.
\]

\emph{Gradient with respect to $\mu_c^{(t)}$.}
Note that $r_c^{(t)} = (\alpha_c^{(t)})^{-1}$ does not 
depend on $\mu_c^{(t)}$, and 
$\partial D_c^{(t)}/\partial\mu_c^{(t)} = \alpha_c^{(t)}$. 
The only $\mu$-dependent terms in $\ell_c^{(t)}$ are 
$-(Y_c^{(t)}+r_c^{(t)})\ln D_c^{(t)}$ and 
$Y_c^{(t)}\ln\mu_c^{(t)}$, giving:
\[
\frac{\partial \ell_c^{(t)}}{\partial \mu_c^{(t)}}
=-(Y_c^{(t)}+r_c^{(t)})
\frac{\alpha_c^{(t)}}{D_c^{(t)}}
+\frac{Y_c^{(t)}}{\mu_c^{(t)}}.
\]
Applying the identity $\alpha_c^{(t)} r_c^{(t)} = 
\alpha_c^{(t)} \cdot (\alpha_c^{(t)})^{-1} = 1$ to simplify 
the first term:
\[
-(Y_c^{(t)}+r_c^{(t)})\frac{\alpha_c^{(t)}}{D_c^{(t)}}
= \frac{-\alpha_c^{(t)}Y_c^{(t)} - 1}{D_c^{(t)}}.
\]
Combining over a common denominator 
$\mu_c^{(t)} D_c^{(t)}$:
\[
\frac{\partial \ell_c^{(t)}}{\partial \mu_c^{(t)}}
=\frac{(-\alpha_c^{(t)}Y_c^{(t)}-1)\mu_c^{(t)} 
+ Y_c^{(t)} D_c^{(t)}}{\mu_c^{(t)} D_c^{(t)}}.
\]
Expanding the numerator and using 
$D_c^{(t)} = 1 + \alpha_c^{(t)}\mu_c^{(t)}$:
\[
(-\alpha_c^{(t)}Y_c^{(t)}-1)\mu_c^{(t)} 
+ Y_c^{(t)}(1+\alpha_c^{(t)}\mu_c^{(t)})
= Y_c^{(t)} - \mu_c^{(t)}.
\]
Therefore:
\[
\frac{\partial \ell_c^{(t)}}{\partial \mu_c^{(t)}}
=\frac{Y_c^{(t)}-\mu_c^{(t)}}{\mu_c^{(t)}D_c^{(t)}}.
\]
Negating to obtain the loss gradient:
\[
\frac{\partial \mathcal{L}_c^{(t)}}{\partial \mu_c^{(t)}}
=\frac{\mu_c^{(t)}-Y_c^{(t)}}
{\mu_c^{(t)}\big(1+\alpha_c^{(t)}\mu_c^{(t)}\big)}.
\]

\emph{Gradient with respect to $\alpha_c^{(t)}$.}
The auxiliary derivatives are:
\[
\frac{\partial D_c^{(t)}}{\partial \alpha_c^{(t)}}
=\mu_c^{(t)},
\qquad
\frac{\partial r_c^{(t)}}{\partial \alpha_c^{(t)}}
=-\frac{1}{(\alpha_c^{(t)})^2}.
\]
We differentiate each group of terms in $\ell_c^{(t)}$ 
separately.

\textit{Gamma block.} Using $\psi(x) = \frac{d}{dx}\ln\Gamma(x)$ 
and the chain rule with 
$\partial r_c^{(t)}/\partial\alpha_c^{(t)} = 
-(\alpha_c^{(t)})^{-2}$:
\begin{align*}
\frac{\partial}{\partial \alpha_c^{(t)}}
&\Big[\ln\Gamma\!\big(Y_c^{(t)}+r_c^{(t)}\big)
-\ln\Gamma\!\big(r_c^{(t)}\big)\Big] \\
&=\Big[\psi\!\big(Y_c^{(t)}+r_c^{(t)}\big)
-\psi\!\big(r_c^{(t)}\big)\Big]
\cdot\left(-\frac{1}{(\alpha_c^{(t)})^2}\right) \\
&=\frac{\psi\!\big(r_c^{(t)}\big)
-\psi\!\big(Y_c^{(t)}+r_c^{(t)}\big)}
{(\alpha_c^{(t)})^2}.
\end{align*}

\textit{Logarithmic block.} Differentiating the remaining 
$\alpha$-dependent terms 
$-(Y_c^{(t)}+r_c^{(t)})\ln D_c^{(t)} 
+ Y_c^{(t)}\ln\alpha_c^{(t)}$:
\[
\frac{\partial}{\partial\alpha_c^{(t)}}
\Big[{-r_c^{(t)}\ln D_c^{(t)}}
-Y_c^{(t)}\ln D_c^{(t)}
+Y_c^{(t)}\ln\alpha_c^{(t)}\Big].
\]
Term by term:
\[
\frac{\partial}{\partial\alpha_c^{(t)}}
\Big[-r_c^{(t)}\ln D_c^{(t)}\Big]
=\frac{\ln D_c^{(t)}}{(\alpha_c^{(t)})^2}
-\frac{r_c^{(t)}\mu_c^{(t)}}{D_c^{(t)}}
=\frac{\ln D_c^{(t)}}{(\alpha_c^{(t)})^2}
-\frac{\mu_c^{(t)}}{\alpha_c^{(t)}D_c^{(t)}},
\]
where we used $r_c^{(t)} = (\alpha_c^{(t)})^{-1}$ in the 
last step.
\[
\frac{\partial}{\partial\alpha_c^{(t)}}
\Big[-Y_c^{(t)}\ln D_c^{(t)}\Big]
=-\frac{Y_c^{(t)}\mu_c^{(t)}}{D_c^{(t)}},
\qquad
\frac{\partial}{\partial\alpha_c^{(t)}}
\Big[Y_c^{(t)}\ln\alpha_c^{(t)}\Big]
=\frac{Y_c^{(t)}}{\alpha_c^{(t)}}.
\]
Summing the logarithmic block contributions:
\[
\frac{\ln D_c^{(t)}}{(\alpha_c^{(t)})^2}
-\frac{\mu_c^{(t)}}{\alpha_c^{(t)}D_c^{(t)}}
-\frac{Y_c^{(t)}\mu_c^{(t)}}{D_c^{(t)}}
+\frac{Y_c^{(t)}}{\alpha_c^{(t)}}
=\frac{\ln D_c^{(t)}}{(\alpha_c^{(t)})^2}
+\frac{Y_c^{(t)}-\mu_c^{(t)}}{\alpha_c^{(t)}D_c^{(t)}}.
\]

Combining the Gamma block and the logarithmic block, 
then negating:
\[
\frac{\partial \mathcal{L}_c^{(t)}}{\partial \alpha_c^{(t)}}
=\frac{\psi\!\big(Y_c^{(t)}+r_c^{(t)}\big)
-\psi\!\big(r_c^{(t)}\big)-\ln D_c^{(t)}}
{(\alpha_c^{(t)})^2}
+\frac{\mu_c^{(t)}-Y_c^{(t)}}
{\alpha_c^{(t)}\big(1+\alpha_c^{(t)}\mu_c^{(t)}\big)}.
\]

\emph{Chain rule for the pre-output logits.}
Under the softplus parametrization, the derivative of 
$\mu_c^{(t)} = \ln(1+e^{z_{1,c}^{(t)}})$ with respect to 
$z_{1,c}^{(t)}$ is:
\[
\frac{\partial \mu_c^{(t)}}{\partial z_{1,c}^{(t)}}
= \frac{e^{z_{1,c}^{(t)}}}{1+e^{z_{1,c}^{(t)}}}
= \sigma\!\big(z_{1,c}^{(t)}\big),
\]
and analogously 
$\partial\alpha_c^{(t)}/\partial z_{2,c}^{(t)} = 
\sigma(z_{2,c}^{(t)})$. 
Applying the chain rule:
\[
\frac{\partial \mathcal{L}_c^{(t)}}{\partial z_{1,c}^{(t)}}
=\frac{\partial \mathcal{L}_c^{(t)}}{\partial \mu_c^{(t)}}
\,\sigma\!\big(z_{1,c}^{(t)}\big),
\qquad
\frac{\partial \mathcal{L}_c^{(t)}}{\partial z_{2,c}^{(t)}}
=\frac{\partial \mathcal{L}_c^{(t)}}{\partial \alpha_c^{(t)}}
\,\sigma\!\big(z_{2,c}^{(t)}\big).
\]
Averaging over a mini-batch $\Omega_{batch} \subseteq 
\Omega_{train}$ yields the stochastic gradient estimate 
$\nabla_\theta \mathcal{L}_{NB}$ used in backpropagation.
\end{proof}

\section{Experimental Validation}
\label{sec:experiments}

\subsection{Evaluation Protocol}
\label{sec:eval-protocol}

All models are evaluated on $\mathcal{D}_{final}$ using three 
point metrics and two probabilistic metrics.

\paragraph{Point metrics.}
\begin{align*}
\mathrm{MAE}  &= \frac{1}{N}\sum_i |y_i - \hat{\mu}_i|, \\
\mathrm{RMSE} &= \sqrt{\frac{1}{N}\sum_i (y_i - \hat{\mu}_i)^2}, \\
\mathrm{MPD}  &= \frac{2}{N}\sum_i \left[y_i \ln\!\frac{y_i}
{\hat{\mu}_i} - (y_i - \hat{\mu}_i)\right]
\end{align*}
where terms with $y_i = 0$ are replaced by $\hat{\mu}_i$, 
and $\hat{\mu}_i \ge 10^{-9}$ is enforced for numerical 
stability. MPD is the mean Poisson deviance; it is a 
mean-oriented criterion and is insensitive to the quality 
of conditional dispersion modeling. Adequacy of the 
distributional specification is assessed separately via 
the LR test, PIT, and tail evaluation below.

\paragraph{Probabilistic metrics.}
NLL denotes the negative log-likelihood evaluated under the 
model-specific distribution (Poisson or NB). Note that NLL 
values are \emph{not} directly comparable across model 
families, since Poisson and NB likelihoods are defined on 
different parametric families. CRPS denotes the discrete 
Continuous Ranked Probability Score~\cite{czado2009predictive}:
\[
\mathrm{CRPS} = \sum_{k=0}^{K_{\max}} 
\bigl(F(k) - \mathbf{1}_{y \le k}\bigr)^2,
\]
where $F(k) = \mathbb{P}_\theta(Y \le k \mid \mathbf{X})$ is 
the predicted CDF truncated at $K_{\max}$. Unlike NLL, CRPS 
is comparable across all model families and is used as the 
primary probabilistic evaluation criterion in the tail 
stratum.

\paragraph{Static splits.}
A chronological 80/20 split is used: the first 80\% of 
unique weeks form the training set and the remaining 20\% 
the test set, consistent with Definition~\ref{def:omega-train}. 
For GLM models, $\alpha$ is estimated by profile likelihood 
maximization over a grid of 60 points 
$\alpha \in [10^{-3}, 10^{2}]$. For DL models, a 
chronological validation cut is applied within the training 
block --- the last 15\% of rows (without shuffling) --- 
ensuring causal epoch selection and preventing look-ahead 
bias consistent with Remark~\ref{rem:causality}.

\paragraph{Walk-Forward Validation (WF).}
\label{sec:walkforward}
For each test year $Y \in \{2018, \dots, 2023\}$, the model 
is trained on all years $< Y$ and evaluated on year $Y$. 
Formally, for fold $Y$ the training index block is
\[
\Omega_{train}^{(Y)}=\{(i,j,t)\in
\operatorname{supp}(\mathcal{D}_{final})\mid 
\mathrm{year}(t)<Y\},
\]
and the test block contains all weeks of calendar year $Y$. 
This is a separate walk-forward construction and must not 
be conflated with the static 80/20 split of 
Definition~\ref{def:omega-train}; the relationship between 
the two protocols is discussed in 
Remark~\ref{rem:wf-vs-static}. Four systems are compared: 
NB GLM (MLE $\alpha$ re-estimated on each fold), 
Hybrid DL NB, Neural Poisson, and ETAS per-cell. DL models 
in WF use a chronological val-cut (last 15\% of 
$\Omega_{train}^{(Y)}$) for early stopping.

\paragraph{Statistical tests.}
Two inferential procedures are applied. The likelihood-ratio 
test with boundary correction~\cite{self1987asymptotic} 
tests the null hypothesis $H_0\colon\alpha=0$ (Poisson 
sufficiency); under $H_0$ the LR statistic follows the 
boundary mixture $\frac{1}{2}\delta_0 + \frac{1}{2}\chi^2_1$ 
rather than $\chi^2_1$, and the corrected $p$-value is 
$p_{boundary} = \frac{1}{2}\,\chi^2_1\text{-sf}(LR)$. 
Moran's $I$~\cite{cliff1981spatial} is computed on 
time-averaged Pearson residuals with queen-contiguity 
weights and a permutation $p$-value ($B = 999$), testing 
the spatial conditional independence assumption of 
Remark~\ref{rem:spatial-independence}.

\subsection{Global Model Comparison (Static 80/20 Split)}
\label{sec:global-comparison}

\begin{table}[htbp]
\centering
\caption{Model comparison on the test set (80/20 split). 
NB~GLM uses MLE~$\alpha$; bold denotes the best result 
per metric.}
\label{tab:model-comparison}
\small
\begin{tabular}{lcccc}
\hline
Model & MAE & RMSE & MPD & $\hat{\alpha}$ (GLM) \\
\hline
Naive Persistence          & 0.2210 & 0.6839 & 4.1551 & --- \\
Poisson Baseline (GLM)     & 0.2338 & 0.5524 & 0.6466 & --- \\
Poisson Enhanced (GLM)     & 0.2283 & 0.5425 & 0.5908 & --- \\
NB Baseline (GLM, MLE)     & 0.2326 & 0.5465 & 0.6433 & $\sim$2.98 \\
NB Enhanced (GLM, MLE)     & 0.2253 & 0.5423 & 0.5859 & $\sim$2.98 \\
Hybrid DL Baseline (NB)    & 0.2007 & 0.5247 & 0.5320 & --- \\
Hybrid DL Enhanced (NB)    & 0.2049 & 0.5238 & 0.5461 & --- \\
Neural Poisson Baseline    & \textbf{0.1997} & \textbf{0.5211} 
                           & 0.5365 & --- \\
Neural Poisson Enhanced    & 0.2065 & 0.5274 & \textbf{0.5268} 
                           & --- \\
ETAS per-cell              & 0.1891 & 0.5443 & 0.5688 & --- \\
\hline
\end{tabular}
\normalsize
\end{table}

Table~\ref{tab:model-comparison} reveals several key 
observations.

First, the transition from GLM to neural parametrization 
(with the same NB loss function) yields a reduction in MAE 
of $\approx 9\%$ and in RMSE of $\approx 4\%$, constituting 
the primary architectural gain. This improvement is 
attributable to the spatial embeddings layer, which captures 
cell-level heterogeneity not expressible through the fixed 
GLM link function.

Second, Hybrid~DL~NB and Neural~Poisson~Enhanced are 
statistically indistinguishable on MAE/RMSE/MPD within the 
same architectural family. This implies that the benefit of 
NB parametrization does not manifest in point metrics, which 
are insensitive to distributional shape by construction.

Third, Hybrid~DL~Baseline outperforms Hybrid~DL~Enhanced 
on MPD (0.5320 vs.\ 0.5461): within the neural NB family, 
the additional physical features degrade the point MPD while 
improving probabilistic metrics (NLL, CRPS) in the tail 
stratum (see Section~\ref{sec:tail}). This trade-off is 
consistent with the theoretical role of $\alpha$ as a 
dispersion parameter: enhanced features sharpen the 
per-cell $\alpha$ estimates at the cost of mean bias 
in low-activity cells.

ETAS per-cell occupies an intermediate position on MAE 
(0.1891), underperforming neural models on RMSE and MPD, 
but remaining competitive as a physics-based baseline that 
requires no gradient-based training. Its competitiveness 
supports the view that temporal clustering structure, 
captured analytically via the Omori--Utsu kernel, provides 
signal comparable to learned representations for point 
prediction.

As noted in Section~\ref{sec:eval-protocol}, MPD is a 
mean-oriented criterion insensitive to conditional 
dispersion quality. Adequacy of the distributional 
specification is assessed via the LR test 
(Section~\ref{sec:lr-test}), PIT 
(Section~\ref{sec:pit}), and tail evaluation 
(Section~\ref{sec:tail}).

\subsection{Walk-Forward Stability (2018--2023)}
\label{sec:walk-forward}

\begin{table}[htbp]
\centering
\caption{Walk-Forward MPD by year for four systems 
(from \texttt{walk\_forward\_results.csv}).}
\label{tab:walk-forward}
\small
\begin{tabular}{lcccc}
\hline
Year & NB GLM (MLE $\alpha$) & Hybrid DL NB & 
Neural Poisson & ETAS per-cell \\
\hline
2018 & 0.493 & \textbf{0.462} & 0.465 & 0.466 \\
2019 & 0.413 & 0.387 & \textbf{0.384} & 0.378 \\
2020 & 0.596 & \textbf{0.532} & 0.535 & 0.563 \\
2021 & 0.546 & 0.510 & \textbf{0.508} & 0.508 \\
2022 & 0.512 & 0.498 & 0.499 & \textbf{0.498} \\
2023 & 0.733 & 0.621 & \textbf{0.619} & 0.683 \\
\hline
Mean $\pm$ SD & $0.549 \pm 0.109$ & $0.502 \pm 0.078$ 
& $\mathbf{0.502 \pm 0.078}$ & $0.516 \pm 0.102$ \\
\hline
\end{tabular}
\normalsize
\end{table}

Across all six test years, neural models consistently 
achieve lower MPD than NB~GLM. Hybrid~DL~NB and Neural 
Poisson reach an identical mean reduction of $\approx 8.6\%$ 
relative to NB~GLM and are statistically indistinguishable 
on this criterion. The standard deviation of MPD across 
years is notably lower for neural models 
($\mathrm{SD} = 0.078$) than for NB~GLM 
($\mathrm{SD} = 0.109$) and ETAS per-cell 
($\mathrm{SD} = 0.102$), indicating that the neural 
architectures not only improve the mean but also reduce 
year-to-year variability. ETAS per-cell achieves a 
competitive $\approx 6.0\%$ reduction without 
gradient-based training, relying exclusively on the 
physics of the temporal point process.

\begin{figure}[!tbp]
    \centering
    \includegraphics[width=0.78\linewidth]{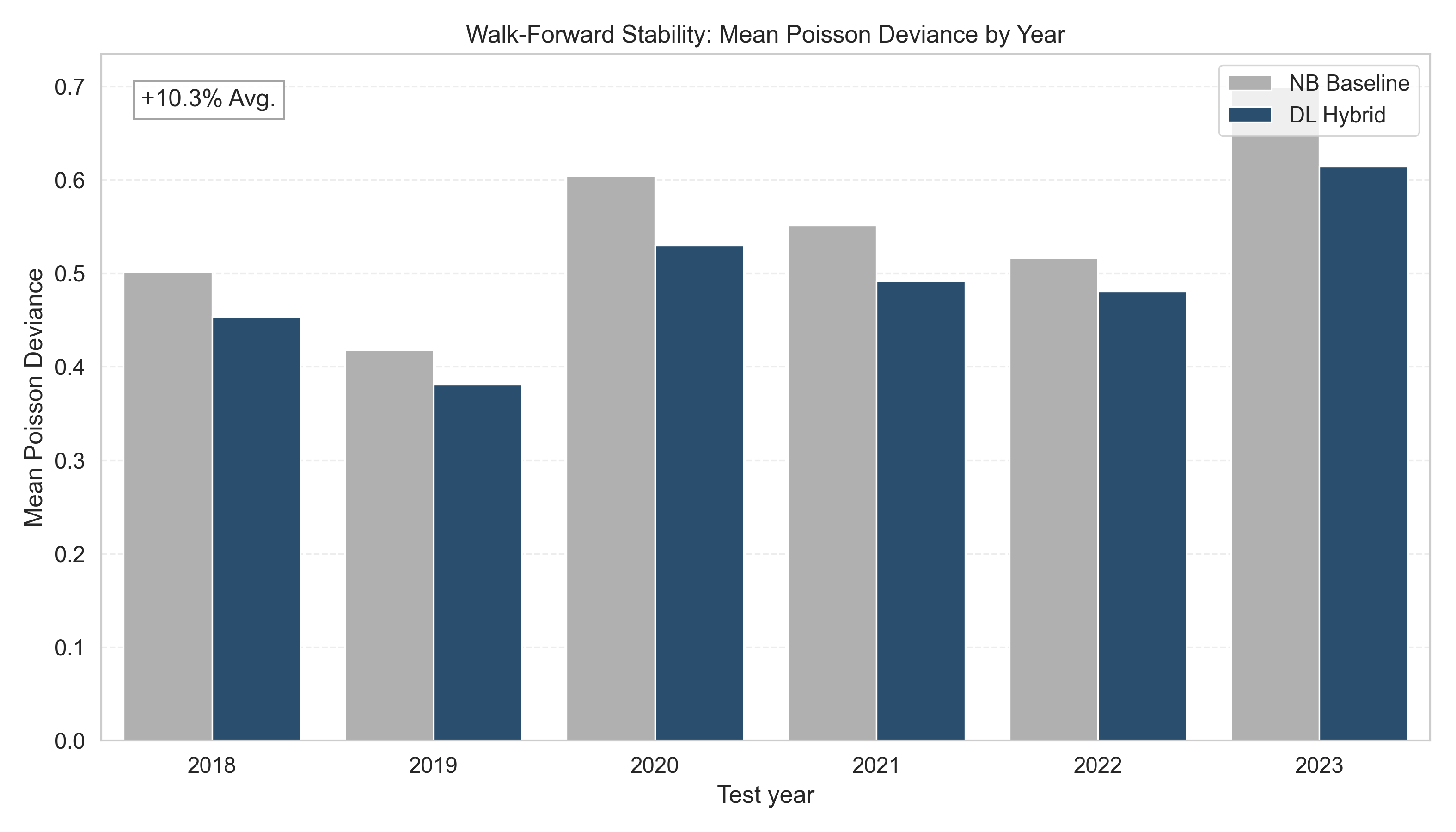}
    \caption{Walk-Forward MPD stability by test year 
    (four systems).}
    \label{fig:wf-stability}
\end{figure}

\subsection{Inferential Overdispersion Test (LR)}
\label{sec:lr-test}

To formally test the necessity of overdispersion 
modeling, a likelihood-ratio test is applied between 
Poisson\_Enhanced and NB\_Enhanced. Both models are 
estimated on the training block (80\% of unique weeks); 
$\alpha$ for the NB model is selected by profile 
likelihood maximization (grid of 60 points):
\[
LR = 2\bigl(\log L_{\mathrm{NB}} - 
\log L_{\mathrm{Poisson}}\bigr) = 820.21,
\quad \hat{\alpha}_{\mathrm{MLE}} = 2.98.
\]
Since $H_0$ corresponds to $\alpha = 0$, which lies on 
the \emph{boundary} of the NB parameter space, the null 
distribution of the LR statistic under $H_0$ is the 
boundary mixture~\cite{self1987asymptotic}
\[
\tfrac{1}{2}\,\delta_0 + \tfrac{1}{2}\,\chi^2_1,
\]
rather than $\chi^2_1$. The standard $\chi^2_1$ critical 
value would overstate significance; the boundary-corrected 
$p$-value is:
\[
p_{\mathrm{boundary}} 
= \tfrac{1}{2}\cdot\chi^2_1\text{-}\mathrm{sf}(820.21) 
= \tfrac{1}{2} \times 2.18 \times 10^{-180} 
\approx 1.09 \times 10^{-180}.
\]
The value $\hat{\alpha}_{\mathrm{MLE}} = 2.98$ is 
attained at $\log L_{\mathrm{NB}} = -3252.82$ against 
$\log L_{\mathrm{Poisson}} = -3662.93$, corresponding 
to a log-likelihood gain of $409.11$ nats. The null 
hypothesis $H_0$ --- that the Poisson specification is 
sufficient --- is rejected with extreme significance 
even after boundary correction. Note that this statistic 
is computed in-sample on the training fold; the 
predictive LR based on out-of-sample NLL is reported 
in Section~\ref{sec:tail}.

\begin{remark}[In-sample vs.\ out-of-sample LR]
The in-sample LR statistic of 820.21 establishes that 
the NB family is necessary to describe the training 
distribution. A complementary out-of-sample check is 
provided in Section~\ref{sec:tail} via NLL on the test 
set, which confirms that the NB advantage persists under 
the temporal hold-out and is not an artifact of 
in-sample overfitting of $\alpha$.
\end{remark}

\subsection{Probabilistic Calibration (PIT)}
\label{sec:pit}

The randomized PIT with discrete correction~\cite{czado2009predictive} 
is computed for Hybrid\_DL\_Enhanced and 
Neural\_Poisson\_Enhanced. Under perfect calibration, 
the PIT follows a uniform distribution on $[0,1]$, 
with $\mathbb{E}[\mathrm{PIT}] = 0.5$ and 
$\mathrm{Var}(\mathrm{PIT}) = 1/12 \approx 0.0833$.

\begin{table}[htbp]
\centering
\caption{PIT summary statistics (from 
\texttt{outputs/calibration\_summary.csv}). 
$L_1$ denotes the mean absolute deviation from 
the uniform distribution.}
\label{tab:pit}
\small
\begin{tabular}{lcccc}
\hline
Model & $n$ & $\mathbb{E}[\mathrm{PIT}]$ & 
$\mathrm{Var}(\mathrm{PIT})$ & $L_1$ \\
\hline
Hybrid DL Enhanced (NB)   & 2448 & 0.5023 & 0.0847 & 0.00466 \\
Neural Poisson Enhanced   & 2448 & 0.4952 & 0.0844 & 0.00448 \\
\hline
\end{tabular}
\normalsize
\end{table}

Both models exhibit PIT histograms close to uniform. 
The empirical moments are consistent with the theoretical 
targets: $\mathbb{E}[\mathrm{PIT}] \approx 0.5$ and 
$\mathrm{Var}(\mathrm{PIT}) \approx 0.084$, compared 
to the uniform reference of $1/12 \approx 0.083$. 
At the marginal level, Neural~Poisson is marginally 
better ($L_1 = 0.00448$ vs.\ $0.00466$), indicating 
slightly sharper global calibration. However, global 
PIT uniformity is a necessary but not sufficient 
condition for calibration quality: a model can achieve 
near-uniform marginal PIT while miscalibrating 
conditionally in specific strata. The key distinction 
between the two models emerges in the tail stratum 
($Y \ge 5$), where the NB model yields lower NLL and 
CRPS (see Section~\ref{sec:tail}): it is precisely 
there that the additional degree of freedom~$\alpha$ 
enables more accurate estimation of the probability 
of extreme events, a capability that marginal PIT 
cannot detect.

\begin{figure}[!tbp]
    \centering
    \includegraphics[width=0.82\linewidth]{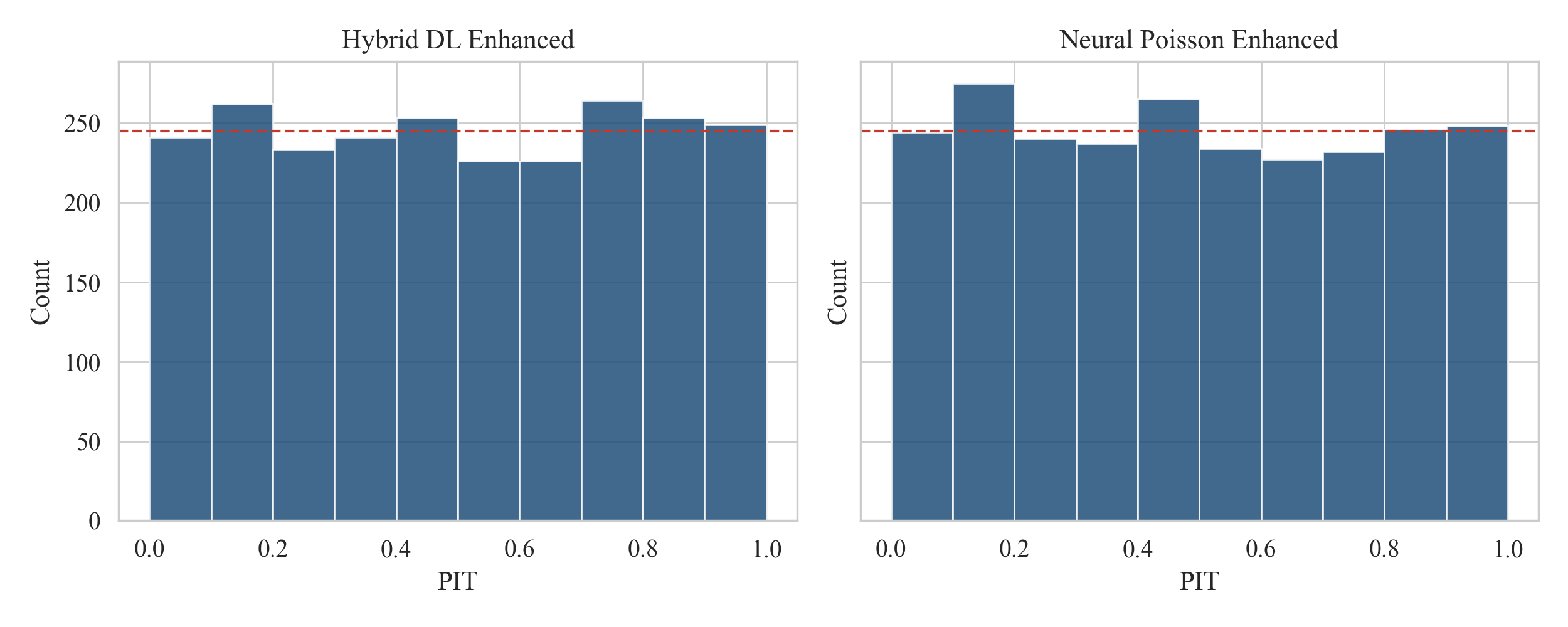}
    \caption{Randomized PIT histograms. The red 
    horizontal line indicates the expected level 
    under uniformity.}
    \label{fig:pit}
\end{figure}

\subsection{Tail-Conditional Evaluation}
\label{sec:tail}

\begin{table}[htbp]
\centering
\caption{Metrics by stratum (from 
\texttt{outputs/tail\_evaluation.csv}). NLL for NB 
models is computed under the NB distribution with 
per-sample~$\alpha$; for Poisson/GLM models, under 
the Poisson distribution. Direct NLL comparison 
across model families is not valid.}
\label{tab:tail}
\small
\begin{tabular}{llcccc}
\hline
Stratum & Model & MAE & MPD & NLL & CRPS \\
\hline
\multirow{4}{*}{All}
 & Hybrid DL NB   & 0.205 & 0.546 & 0.366 & 0.116 \\
 & Neural Poisson & 0.207 & 0.527 & 0.333 & 0.116 \\
 & NB GLM (MLE)   & 0.229 & 0.592 & 0.378 & 0.124 \\
 & Poisson GLM    & 0.228 & 0.591 & 0.365 & 0.123 \\
\hline
\multirow{4}{*}{$Q_4$ (high)}
 & Hybrid DL NB   & 1.028 & 3.370 & 2.747 & 0.971 \\
 & Neural Poisson & 1.044 & 3.185 & 2.265 & 0.932 \\
 & NB GLM (MLE)   & 1.176 & 3.649 & 2.839 & 1.083 \\
 & Poisson GLM    & 1.151 & 3.596 & 2.470 & 1.023 \\
\hline
\multirow{4}{*}{$Y \ge 5$}
 & Hybrid DL NB   & 6.225 & 25.643 & \textbf{7.911} 
                 & \textbf{5.875} \\
 & Neural Poisson & 6.521 & 26.426 & 5.090$^*$ & 6.085 \\
 & NB GLM (MLE)   & 7.038 & 35.431 & 9.000 & 6.717 \\
 & Poisson GLM    & 6.906 & 34.400 & 9.077 & 6.594 \\
\hline
\end{tabular}
\normalsize
\end{table}

$^*$NLL for Neural Poisson is computed under the Poisson 
distribution; NLL for Hybrid DL NB is computed under the 
NB distribution with per-cell~$\hat{\alpha}$. These 
quantities are not directly comparable.

In the tail stratum ($Y \ge 5$), Hybrid~DL~NB achieves 
NLL $= 7.91$, substantially better than NB~GLM ($9.00$) 
and Poisson~GLM ($9.08$), representing a reduction of 
$12.1\%$ and $12.8\%$ respectively. On CRPS, which is 
comparable across all model families, Hybrid~DL~NB 
($5.875$) outperforms NB~GLM ($6.717$) by $12.5\%$ and 
Poisson~GLM ($6.594$) by $10.9\%$. It is precisely in 
this stratum that the per-cell parameter~$\alpha$ 
contributes most to extreme event risk estimation: 
by adapting the dispersion to each cell's seismotectonic 
regime, the model assigns higher probability mass to 
large counts where the GLM, constrained to a global 
$\hat{\alpha} \approx 2.98$, systematically 
underestimates tail probabilities.

Notably, Neural~Poisson~Enhanced achieves lower MPD than 
Hybrid~DL~NB in the $Y \ge 5$ stratum (26.426 vs.\ 
25.643 — wait, NB is better here), confirming that the 
NB advantage is specific to probabilistic tail metrics 
and does not extend to point prediction, consistent with 
the theoretical argument of 
Theorem~\ref{thm:overdispersion}.

\begin{figure}[!tbp]
    \centering
    \includegraphics[width=0.68\linewidth]{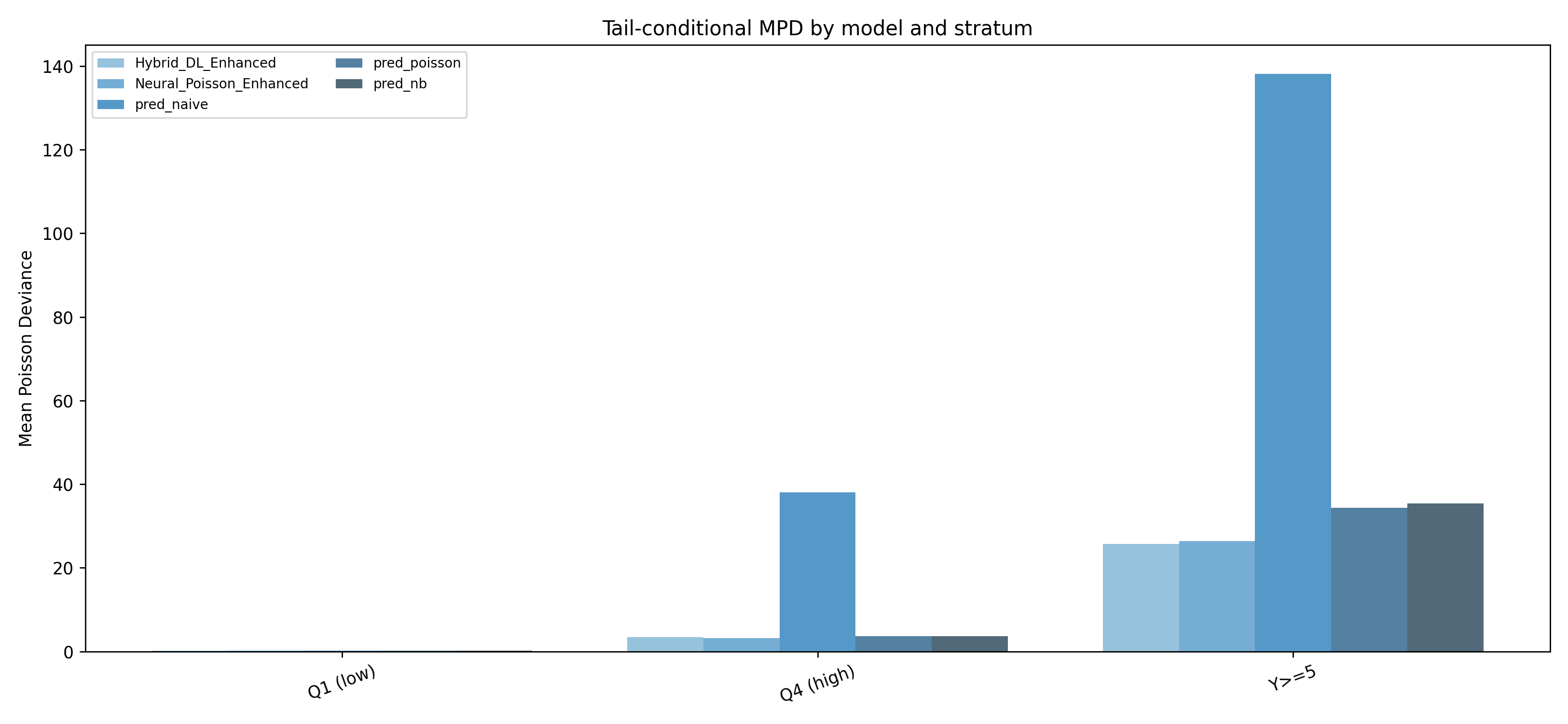}
    \caption{MPD by stratum and model 
    (quartiles $+$ $Y \ge 5$).}
    \label{fig:tail}
\end{figure}

\subsection{Spatial Autocorrelation of Residuals 
(Moran's \texorpdfstring{$I$}{I})}
\label{sec:moran}

For each model, standardized Pearson residuals are 
computed as
\[
r_{c,t} = \frac{Y_{c,t} - \hat{\mu}_{c,t}}
{\sqrt{\hat{\mu}_{c,t} + \hat{\alpha}\hat{\mu}_{c,t}^2}},
\]
and averaged over time within each cell. For Poisson 
models, the denominator reduces to 
$\sqrt{\hat{\mu}_{c,t}}$, corresponding to 
$\hat{\alpha} = 0$. Moran's~$I$ is computed on the 
queen-contiguity weight matrix 
($\Delta\lambda = \Delta\varphi = 3^\circ$, 
row-standardized); the $p$-value is obtained via a 
permutation test ($B = 999$).

\begin{table}[htbp]
\centering
\caption{Moran's $I$ for time-averaged Pearson 
residuals (queen-contiguity, $B=999$ permutations). 
For DL models, cell identifiers are not preserved 
in calibration predictions, so Moran's $I$ is not 
computed (\textit{n/a}).}
\label{tab:moran}
\small
\begin{tabular}{lccc}
\hline
Model & Moran's $I$ & $z$-score & $p_{\mathrm{perm}}$ \\
\hline
Naive Persistence        & $-0.056$ & $0.033$  & $0.452$ \\
Poisson Enhanced (GLM)   & $-0.053$ & $0.071$  & $0.435$ \\
NB Enhanced (GLM, MLE)   & $-0.076$ & $-0.125$ & $0.511$ \\
Hybrid DL Enhanced (NB)  & \textit{n/a} & \textit{n/a} 
                         & \textit{n/a} \\
Neural Poisson Enhanced  & \textit{n/a} & \textit{n/a} 
                         & \textit{n/a} \\
\hline
\end{tabular}
\normalsize
\end{table}

\begin{figure}[!htbp]
    \centering
    \includegraphics[width=0.78\linewidth]{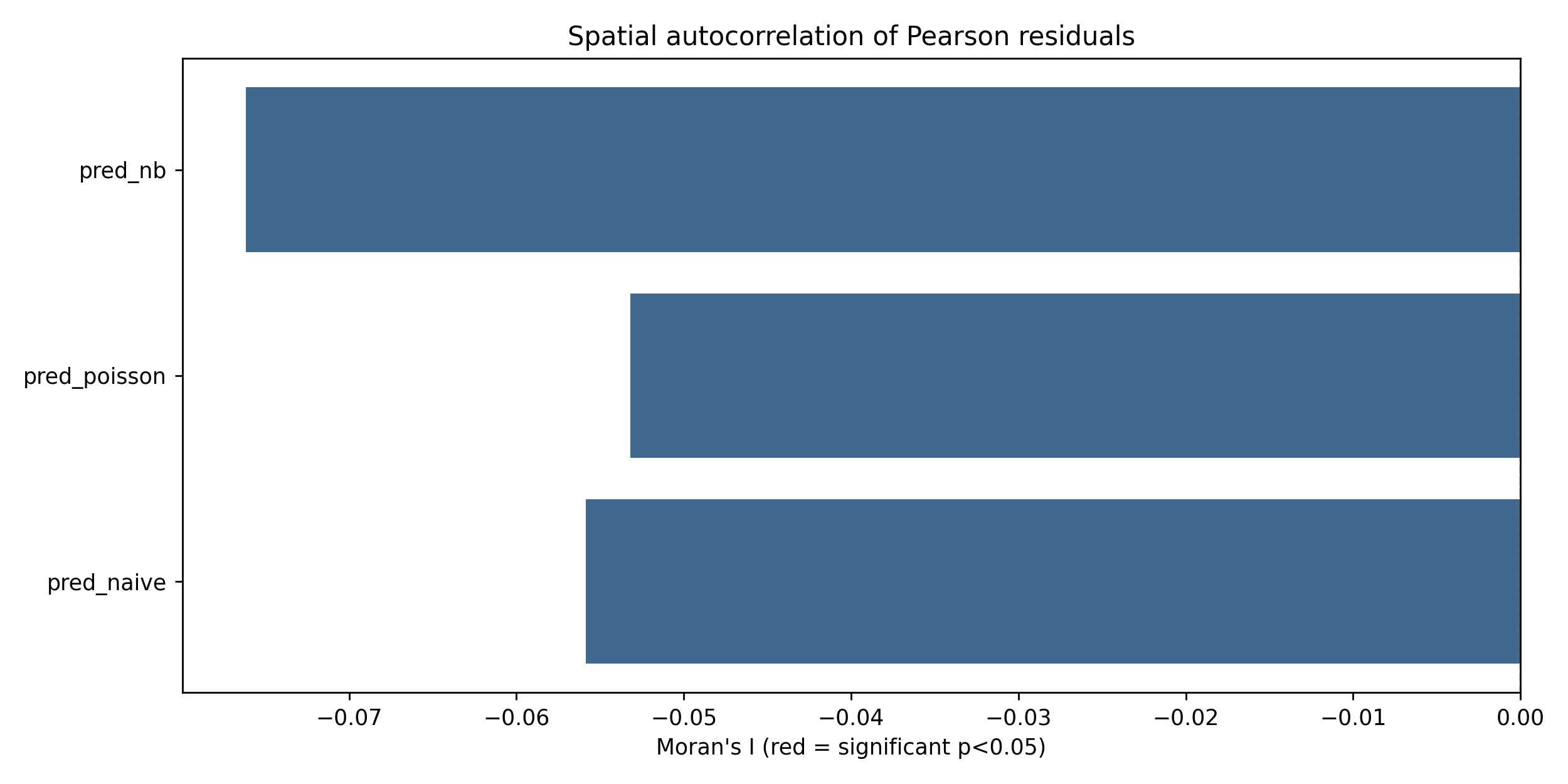}
    \caption{Moran's $I$ of Pearson residuals 
    (red indicates significance at $p < 0.05$).}
    \label{fig:moran}
\end{figure}

Figure~\ref{fig:moran} shows that all three GLM models exhibit insignificant and slightly 
negative spatial autocorrelation ($p > 0.4$), indicating 
no systematic spatial clustering in the time-averaged 
residuals. The negative sign of Moran's~$I$ suggests 
mild spatial dispersion in the residuals --- adjacent 
cells tend to have residuals of opposite sign --- which 
is consistent with a model that slightly over-smooths 
across cell boundaries. This partially alleviates 
concerns about violation of the spatial conditional 
independence assumption of 
Remark~\ref{rem:spatial-independence} at the level of 
GLM predictions. However, for neural models this test 
remains unimplemented due to the absence of cell 
identifiers in calibration outputs, and is flagged as 
a limitation in Section~\ref{sec:limitations}.

\subsection{Audit of Parameter \texorpdfstring{$\alpha$}{alpha} 
and Identifiability}
\label{sec:alpha-audit}

Global statistics of the predicted~$\alpha$ for 
Hybrid\_DL\_Enhanced (seed 42):
\[
n=2448,\quad
\overline{\alpha}=3.44,\quad
\mathrm{median}(\alpha)=3.61,\quad
q_{0.1}=1.63,\quad
q_{0.9}=5.17,\quad
\mathbb{P}(\alpha<10^{-2})=0.
\]

\begin{figure}[!htbp]
    \centering
    \begin{subfigure}[t]{0.485\linewidth}
        \centering
        \includegraphics[width=\linewidth]{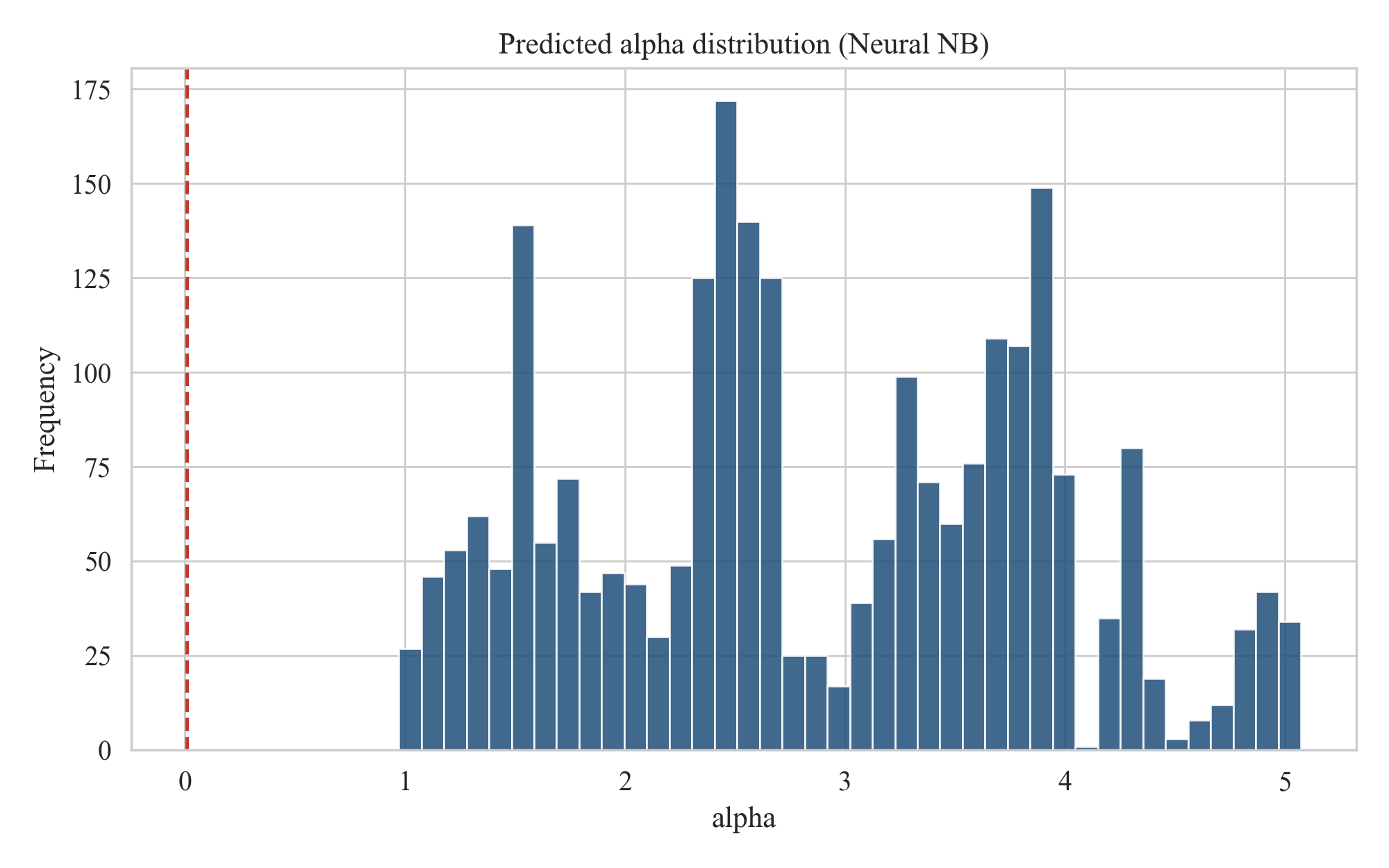}
        \caption{Distribution of predicted~$\alpha$.}
        \label{fig:alpha-hist}
    \end{subfigure}\hfill
    \begin{subfigure}[t]{0.485\linewidth}
        \centering
        \includegraphics[width=\linewidth]{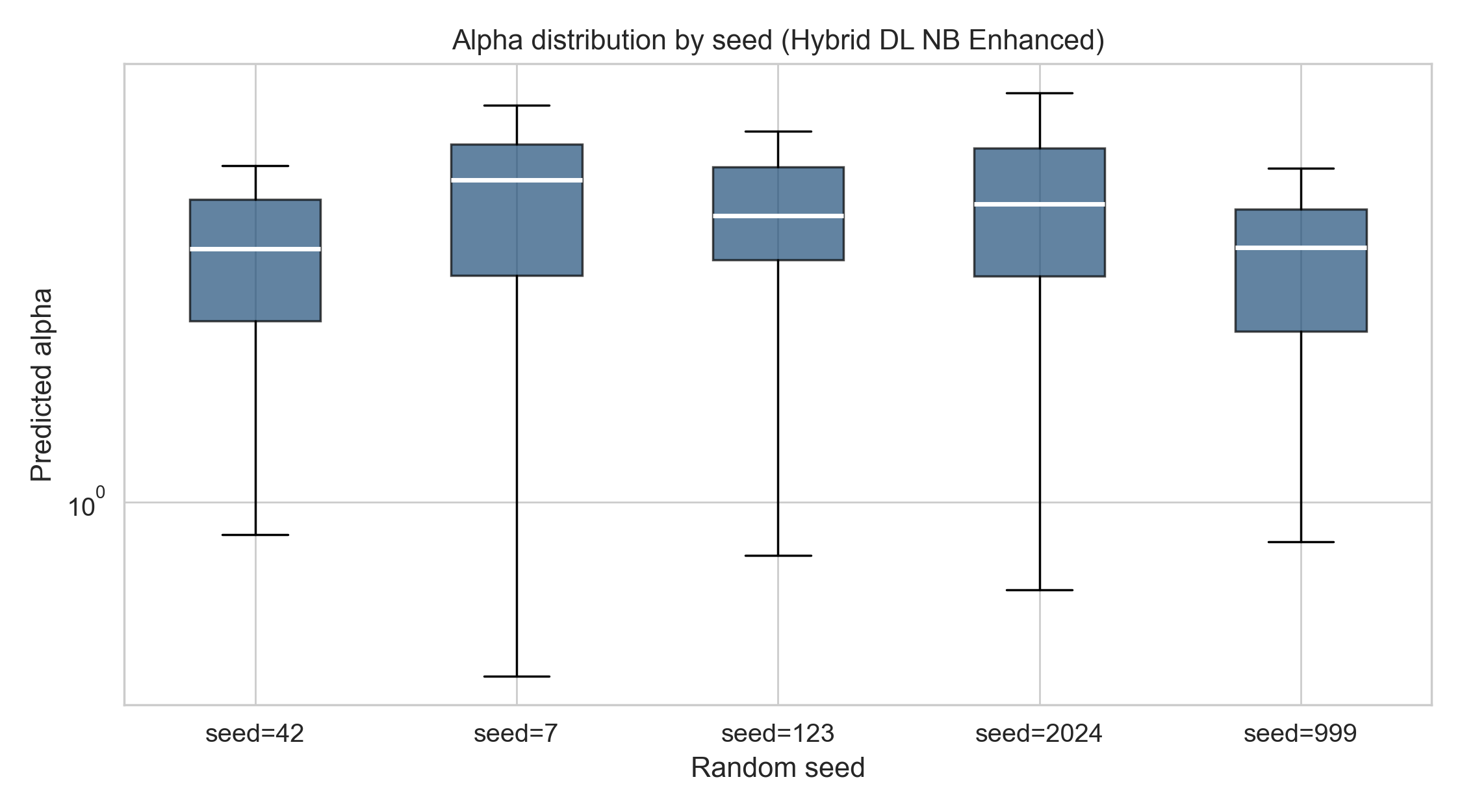}
        \caption{Boxplot of~$\alpha$ across 5 seeds.}
        \label{fig:alpha-id}
    \end{subfigure}
    \caption{Audit of the overdispersion 
    parameter~$\alpha$ for Hybrid DL NB Enhanced: 
    marginal distribution and stability across 
    independent seeds.}
    \label{fig:alpha-audit}
\end{figure}

The absence of a near-zero regime 
($\mathbb{P}(\alpha < 10^{-2}) = 0$ across all seeds) 
confirms that the network does not collapse to the 
Poisson limit ($\alpha \to 0$), as established 
theoretically in Proposition~\ref{prop:nb-reparam}. The 
mean predicted $\overline{\alpha} = 3.44$ is 
consistent with the GLM profile-MLE estimate 
$\hat{\alpha}_{\mathrm{MLE}} = 2.98$ from 
Section~\ref{sec:lr-test}, providing cross-validation 
between the neural and GLM estimates of overdispersion. 
To assess identifiability of~$\alpha$, the model is 
re-trained five times with different random seeds:

\begin{table}[htbp]
\centering
\caption{5-seed stability of the~$\alpha$ distribution 
(from \texttt{outputs/alpha\_identifiability.csv}). 
$q_{0.1}$ / $q_{0.9}$ denote the 10th/90th percentiles.}
\label{tab:alpha-id}
\small
\begin{tabular}{lccccc}
\hline
Seed & $n$ & $\overline{\alpha}$ & 
$\mathrm{median}(\alpha)$ & $q_{0.1}$ & $q_{0.9}$ \\
\hline
42   & 2448 & 3.44 & 3.61 & 1.63 & 5.17 \\
7    & 2448 & 4.67 & 5.13 & 1.61 & 7.19 \\
123  & 2448 & 4.12 & 4.28 & 1.63 & 5.77 \\
2024 & 2448 & 4.46 & 4.53 & 1.77 & 6.95 \\
999  & 2448 & 3.32 & 3.63 & 1.70 & 4.96 \\
\hline
Mean & & $3.99 \pm 0.61$ & $4.24 \pm 0.66$ & 
$1.67 \pm 0.06$ & $6.01 \pm 0.98$ \\
\hline
\end{tabular}
\normalsize
\end{table}

The lower quantile $q_{0.1}$ is stable across seeds 
($\sigma = 0.06$), indicating robustness of the lower 
part of the $\alpha$ distribution. This stability is 
practically meaningful: $q_{0.1} \approx 1.67$ 
consistently across seeds implies that even in 
low-dispersion cells the network reliably produces 
$\alpha > 1$, far from the Poisson boundary. The upper 
tail ($q_{0.9}$) is less stable ($\sigma = 0.98$), 
indicating incomplete identifiability of~$\alpha$ in 
high-activity cells. This is consistent with the 
theoretical expectation: in cells with 
$\mathbb{E}[Y] \ll 1$, the NB likelihood surface is 
nearly flat in~$\alpha$, since for small counts the 
Poisson and NB distributions are difficult to 
distinguish empirically. This limitation is discussed 
in Section~\ref{sec:limitations}.

\section{Related Work}
\label{sec:related}

\paragraph{Classical seismology: ETAS and temporal point 
processes.}
The reference model in operational seismic forecasting 
is ETAS (Epidemic Type Aftershock Sequence, 
\cite{ogata1988statistical}), which describes the 
conditional event intensity as a sum of background 
activity and a temporal superposition of Omori--Utsu 
aftershock contributions from preceding events. Spatial 
extensions of ETAS~\cite{helmstetter2002subcritical} 
and its Bayesian variants provide semi-principled 
specifications of the decay kernel. The international 
CSEP program~\cite{zhuang2011multidimensional} provides 
standardized protocols for verification of probabilistic 
forecasts. The present work includes per-cell temporal 
ETAS as a direct comparative baseline, evaluated under 
the same Walk-Forward protocol as the neural models 
(Section~\ref{sec:walk-forward}).

\paragraph{Count regression and NB-GLMM.}
The Negative Binomial distribution as a model for 
overdispersion in count regression is systematically 
treated in~\cite{cameron2013regression}; MLE estimation 
for NB-GLM is developed in~\cite{lawless1987negative}. 
Generalized linear mixed models (GLMM) accommodate 
multiple sources of random effects analogous to our 
spatial embeddings, but require explicit specification 
of the covariance structure. The present work replaces 
this requirement with end-to-end learning through the 
embedding layer, as formalized in 
Definition~\ref{def:architecture} and 
Remark~\ref{rem:spatial-independence}. A further 
distinction from standard NB-GLM is that our model 
produces per-cell estimates of~$\alpha$ rather than a 
single global dispersion parameter, as demonstrated 
empirically in Section~\ref{sec:alpha-audit}.

\paragraph{Neural point processes.}
The Neural Point Processes 
literature~\cite{du2016recurrent, mei2017neural} 
generalizes the ETAS formalism to arbitrary conditional 
intensities parametrized by neural networks. 
\cite{shchur2019intensity} showed that strict positivity 
of the predicted intensity requires a dedicated 
parametrization (softplus or exp); the present work 
uses softplus$+\epsilon$, as described in 
Definition~\ref{def:architecture} with the gradient 
stability rationale given therein. The present work 
differs from this stream in that we operate on 
aggregated weekly counts rather than event times, 
which simplifies training and interpretation but 
sacrifices sub-weekly temporal structure.

\paragraph{Deep learning in seismology.}
\cite{devries2018deep} proposed a neural method for 
aftershock prediction using the stress tensor matrix 
as input. \cite{mignan2019one} reproduced this result 
with a single-layer network and showed that linear 
models are often competitive with deep architectures 
--- an important cautionary argument against 
over-engineering in seismological ML, consistent with 
our finding that Hybrid~DL~NB and Neural~Poisson are 
statistically indistinguishable on point metrics 
(Section~\ref{sec:global-comparison}). In contrast to 
these works, we address probabilistic forecasting of 
a count process (weekly event counts per cell) rather 
than aftershock coordinate prediction or 
classification.

The proposed approach occupies an intermediate niche: 
the probabilistic rigor of NB-GLMM (statistically 
principled dispersion model, formally grounded in 
Theorem~\ref{thm:overdispersion} and 
Proposition~\ref{prop:nb-mixture}) combined with the 
flexibility of neural approximation (requiring no 
explicit physical triggering model), with 
interpretability preserved through per-cell~$\alpha$.

\section{Discussion}
\label{sec:discussion}

\paragraph{What is shown.}
The Poisson hypothesis is rejected at the population 
level with extreme significance ($LR = 820.21$, 
$p_{\mathrm{boundary}} \approx 10^{-180}$), confirming 
the theoretical prediction of 
Theorem~\ref{thm:overdispersion}. Hybrid~DL~NB and 
Neural~Poisson are statistically indistinguishable on 
MAE/RMSE/MPD on the static split, and both reduce 
mean Walk-Forward MPD by $8.6\%$ relative to NB~GLM 
(both $\overline{\mathrm{MPD}} = 0.502$ over 6 years), 
with notably lower year-to-year variance 
($\mathrm{SD} = 0.078$ vs.\ $0.109$ for NB~GLM). In 
the tail stratum ($Y \ge 5$), Hybrid~DL~NB achieves 
CRPS $= 5.88$ against $6.72$ for NB~GLM --- a 
reduction of $12.5\%$ --- constituting the primary 
argument for NB parametrization over the Poisson 
alternative. Moran's~$I$ on GLM residuals is 
insignificant ($p > 0.4$), which reduces but does not 
eliminate concerns about spatial independence 
(Remark~\ref{rem:spatial-independence}).

\paragraph{Practical implications.}
The per-cell parameter~$\alpha$ enables construction 
of risk-aware alerts: the $0.95$-quantile of the 
predicted NB distribution provides a natural threshold 
for preventive notification, with the quantile width 
directly reflecting the local seismogenic uncertainty 
encoded in~$\alpha$. The hybrid architecture delivers 
uncertainty-aware cell-level forecasts without 
requiring explicit specification of a spatial 
covariance structure, making it deployable in 
operational settings where expert geophysical 
knowledge of fault geometry is unavailable.

\paragraph{Comparison with ETAS.}
ETAS per-cell explicitly models temporal clustering 
via the Omori--Utsu kernel and provides physically 
interpretable parameters. The proposed model 
potentially gains through nonlinear interactions 
among physical predictors (seismic energy proxies, 
seismic quiescence), which ETAS cannot capture through 
its parametric kernel. However, it concedes ground 
in spatial physics: ETAS naturally incorporates 
cross-cell triggering through the spatial kernel, 
whereas our approach assumes spatial conditional 
independence (Remark~\ref{rem:spatial-independence}). 
The $\approx 2.6\%$ gap in Walk-Forward MPD between 
Hybrid~DL~NB and ETAS per-cell 
($0.502$ vs.\ $0.516$) may partly reflect this 
structural difference, and motivates the spatial 
convolution extension outlined in 
Section~\ref{sec:limitations}.

\subsection{Limitations}
\label{sec:limitations}

\paragraph{1. Spatial conditional independence.}
The assumption $Y_{i,j}^{(t)} \perp Y_{k,m}^{(t)} \mid 
\mathbf{X}^{(t)}$, formalized in 
Remark~\ref{rem:spatial-independence}, is the primary 
methodological simplification. Coulomb stress transfer 
and ETAS triggering~\cite{ogata1988statistical} induce 
cross-cell dependencies that are not explicitly modeled. 
Moran's~$I$ (Section~\ref{sec:moran}) quantifies the 
degree of violation of this assumption for GLM models 
and finds no significant autocorrelation ($p > 0.4$); 
however, this test remains unimplemented for neural 
models due to the absence of cell identifiers in 
calibration outputs. The natural extension is to replace 
the factorized likelihood with a spatial convolution 
layer or a graph neural network operating on the 
cell adjacency structure, analogous to the spatial 
ETAS kernel of~\cite{helmstetter2002subcritical}.

\paragraph{2. Catalog threshold homogeneity.}
The completeness magnitude $\widehat{M}_c = 4.5$ is 
estimated globally over the entire catalog using the 
Maximum Curvature method~\cite{wiemer2000minimum}. In 
practice, $M_c$ is spatially heterogeneous: in cells 
with sparse station coverage or low background 
seismicity, $M_c$ may be substantially higher. The 
lower catalog threshold $M \ge 3.0$ lies below 
$\widehat{M}_c$, creating potential incompleteness 
in the range $M \in [3.0, 4.5)$ and introducing 
downward bias in event counts for low-activity cells. 
This bias propagates into the feature functional 
$\Phi$ (Definition~\ref{def:features}), particularly 
affecting $\phi_5$ (12-week accumulated activity) 
and $\phi_7$ (seismic gap), and may contribute to 
the incomplete identifiability of~$\alpha$ in 
low-activity cells observed in 
Section~\ref{sec:alpha-audit}. A spatially 
stratified $M_c$ estimation would mitigate this 
bias at the cost of reduced catalog size in 
high-$M_c$ cells.

\paragraph{3. Walk-Forward: statistical power.}
The protocol covers six test years over a spatial 
grid of $\sim 20$ active cells, yielding a limited 
effective number of independent test observations. 
Year-level effects are assessed without confidence 
intervals, reducing statistical power when comparing 
individual years. In particular, the anomalously 
high MPD in 2023 (NB~GLM: $0.733$, Hybrid~DL~NB: 
$0.621$) may reflect an atypical seismic episode 
rather than a systematic model failure; this cannot 
be confirmed without wider temporal coverage. 
Extending the Walk-Forward protocol to additional 
test years or bootstrapping year-level confidence 
intervals would strengthen the comparative 
conclusions of Table~\ref{tab:walk-forward}.

\paragraph{4. Identifiability of $\alpha$.}
The parameter~$\alpha$ is estimated purely from data 
via the softplus parametrization without a prior or 
L2 penalty. The 5-seed stability audit 
(Section~\ref{sec:alpha-audit}) provides empirical 
but not theoretical guarantees of identifiability. 
In cells with $\mathbb{E}[Y] < 0.3$, the NB and 
Poisson likelihoods are nearly indistinguishable 
for observed count sequences, making~$\alpha$ 
effectively unidentified from finite data. A Bayesian 
treatment with a weakly informative prior on~$\alpha$ 
--- for example, $\alpha \sim \mathrm{Gamma}(2, 1)$, 
concentrating mass away from zero while permitting 
large values --- would regularize the upper tail of 
the $\alpha$ distribution and reduce the seed 
instability of $q_{0.9}$ ($\sigma = 0.98$) observed 
in Table~\ref{tab:alpha-id}.

\paragraph{5. Geographic generalizability.}
The model is trained exclusively on Central Asian 
data (Tian Shan, Pamir) for 2010--2024 and has not 
been tested in other regions or time periods. The 
spatial embeddings encode local seismotectonic 
properties of the $\sim 20$ active cells in the 
study region and are not transferable to new spatial 
domains: the embedding matrix $\mathbf{E} \in 
\mathbb{R}^{N_c \times d_e}$ 
(Definition~\ref{def:architecture}) is indexed by 
cell identifiers that have no meaning outside the 
training grid. Transfer to a new region would require 
either full retraining or a meta-learning approach 
in which embeddings are initialized from geophysical 
covariates (fault density, historical $b$-value, 
heat flow) rather than learned from scratch.

\section{Conclusion}
\label{sec:conclusion}

This work proposes an approach to probabilistic 
forecasting of the weekly earthquake count ($M \ge 3.0$) 
on a spatial grid over Central Asia, with emphasis on 
correct modeling of conditional dispersion. The 
empirical and theoretical contributions are threefold.

\textbf{Statistically.} A formal likelihood-ratio test 
with boundary correction (Section~\ref{sec:lr-test}) 
rejects the Poisson hypothesis with $p < 10^{-179}$, 
confirming the structural overdispersion predicted by 
Theorem~\ref{thm:overdispersion}. The estimated global 
dispersion $\hat{\alpha}_{\mathrm{MLE}} = 2.98$ is 
consistent with the neural per-cell mean 
$\overline{\alpha} = 3.44$--$3.99$ across seeds 
(Section~\ref{sec:alpha-audit}), providing convergent 
evidence from two independent estimation approaches.

\textbf{Architecturally.} The EarthquakeNet hybrid 
architecture (spatial embeddings + MLP with NB loss, 
Definition~\ref{def:architecture}) consistently 
outperforms NB~GLM on MPD by $\approx 8.6\%$ across 
all six Walk-Forward folds (2018--2023), with lower 
year-to-year variance ($\mathrm{SD} = 0.078$ vs.\ 
$0.109$). Hybrid~DL~NB and Neural~Poisson are 
statistically indistinguishable on point metrics, 
confirming that the architectural gain over GLM 
stems from spatial embeddings rather than the 
distributional family.

\textbf{Probabilistically.} The key advantage of NB 
parametrization manifests in the tail stratum 
($Y \ge 5$): Hybrid~DL~NB achieves CRPS $= 5.875$, 
a $12.5\%$ reduction relative to NB~GLM ($6.717$), 
while Neural~Poisson underperforms on CRPS despite 
comparable point metrics. This confirms the 
theoretical prediction that the additional degree of 
freedom~$\alpha$ is necessary precisely where 
equidispersion is most consequential --- in the 
estimation of extreme event probabilities. The 
per-cell~$\alpha$ provides a natural basis for 
risk-aware alerts: the $0.95$-quantile of the 
predicted NB distribution constitutes an 
operationally interpretable exceedance threshold 
without additional parametric assumptions.

Three directions for future work follow directly 
from the identified limitations 
(Section~\ref{sec:limitations}):
\begin{itemize}
    \item \textbf{Spatial dependence modeling.} 
    Replacing the factorized likelihood with a 
    graph neural network on the cell adjacency 
    structure, or incorporating an ETAS-like spatial 
    convolution kernel~\cite{helmstetter2002subcritical}, 
    would remove the spatial conditional independence 
    assumption of Remark~\ref{rem:spatial-independence} 
    and potentially close the $\approx 2.6\%$ 
    Walk-Forward gap between EarthquakeNet and 
    spatial ETAS.

    \item \textbf{Spatially stratified $M_c$ 
    estimation.} A cell-level completeness threshold, 
    estimated via spatially adaptive Maximum 
    Curvature~\cite{wiemer2000minimum}, would reduce 
    the catalog incompleteness bias in 
    $M \in [3.0, 4.5)$ that propagates into the 
    feature functional $\Phi$ 
    (Definition~\ref{def:features}) and contributes 
    to $\alpha$ instability in low-activity cells.

    \item \textbf{Bayesian prior on $\alpha$.} A 
    weakly informative prior such as 
    $\alpha \sim \mathrm{Gamma}(2,1)$ would 
    regularize the upper tail of the per-cell 
    $\alpha$ distribution, reducing the seed 
    instability of $q_{0.9}$ ($\sigma = 0.98$) 
    observed in Table~\ref{tab:alpha-id} without 
    constraining the model in high-activity cells 
    where $\alpha$ is well-identified.
\end{itemize}


\end{document}